\newtheorem{theorem}{Theorem}
\newtheorem{lemma}{Lemma}
\newtheorem{corollary}{Corollary}
\theoremstyle{definition}
\newtheorem{remark}{Remark}
\newcommand{\I}{{\mathrm{I}}}
\newcommand{\tr}{{\mathrm{tr}}}
\newcommand{\E}{{\mathrm{E}}}
\newcommand{\Var}{{\mathrm{Var}}}
\newcommand{\rank}{{\mathrm{rank}}}
\newcommand{\pcite}[1]{\citeauthor{#1}'s \citeyearpar{#1}}
\begin{document}

\title{Necessary and sufficient conditions for posterior propriety for generalized linear mixed models}

\author{Yalin Rao$^1$ and Vivekananda Roy$^2$\\
  Department of Mathematics and Statistics, University of Massachusetts, Amherst$^1$\\ Department of Statistics, Iowa State University$^2$}

 \date{}

 \maketitle
 
  \begin{abstract}
    Generalized linear mixed models (GLMMs) are commonly used to
    analyze correlated discrete or continuous response data. In
    Bayesian GLMMs, the often-used improper priors may yield
    undesirable improper posterior distributions. Thus, verifying
    posterior propriety is crucial for valid applications of Bayesian
    GLMMs with improper priors. Here, we consider the popular improper
    uniform prior on the regression coefficients and several proper or
    improper priors, including the widely used gamma and power priors
    on the variance components of the random effects. We also
    construct an approximate Jeffreys' prior for objective Bayesian
    analysis of GLMMs. For the two most widely used GLMMs, namely, the
    binomial and Poisson GLMMs, we provide easily verifiable
    sufficient conditions compared to the currently available
    results. We also derive the necessary conditions for posterior
    propriety for the general exponential family GLMMs. Finally, we
    use examples involving one-way and two-way random effects models
    to demonstrate the theoretical results derived here.
\end{abstract}


{\it Keywords: Bayesian GLMMs, Diffuse prior, Improper prior, Jeffreys' prior, Objective Bayes, Reference prior, Variance component}


\section{Introduction} 
\label{sec:int}
Generalized linear mixed models (GLMMs) are widely used statistical
models. The popularity of GLMMs is due to their flexibility in
incorporating both fixed and random effects and the ability to handle
non-normal and heteroscedastic data. The random effects in
GLMMs are particularly useful for capturing correlation in data sets
involving repeated measures, longitudinal studies, or other complex
relationships.

Bayesian GLMMs allow for the incorporation of prior knowledge, if
available, about the parameters and making probabilistic statements on
them. On the other hand, if improper priors are used for
expressing minimal prior information, or for mathematical convenience, or for sensitivity analysis to assess how the choice of prior impacts
the resulting inference, the resulting posterior distributions of the
parameters are not guaranteed to be proper. It is known that the
likelihood function of a GLMM is not available in closed form, and so
are the posterior densities of the parameters. Generally, Markov chain
Monte Carlo (MCMC) algorithms are used to explore the posterior
densities of GLMMs \citep{roy:2022}. However, it is known that the
standard MCMC estimator converges to zero with probability one if the
MCMC chain corresponds to an improper posterior distribution
\citep{athreya2014monte}. Further, MCMC samplers may not provide a red flag when the posterior distribution is
improper \citep{hobert1996effect}. Thus, one must undertake
theoretical analysis to establish posterior propriety before
proceeding with inference. In this article, we study conditions
guaranteeing posterior propriety for GLMMs for widely used improper
priors on the fixed effects and the variance components parameters.

A few articles in the literature address the posterior propriety of
Bayesian GLMMs. To briefly describe the results of these
articles, we introduce some standard notations. In particular, we
denote the observed data by $y=(y_{1},\dots,y_{n})$. Also, we use $X$
and $Z$ to denote the $n \times p$ fixed effect design matrix and the
$n \times q$ random effects design matrix, respectively. Finally, let
$\beta \in \mathbb{R}^{p}$ be the regression coefficients vector, and
$u \in \mathbb{R}^{q}$ be the random effects vector following a
multivariate normal distribution. \cite{nata:mccu:1995} derive
necessary and sufficient conditions for the posterior propriety for a
single variance component Bernoulli GLMM under the assumption that
$\beta$ is known. In \cite{natarajan2000reference}, sufficient and
necessary conditions for posterior propriety for the Bernoulli GLMM
with a general covariance matrix for the random effect vector $u$ and
the improper uniform prior on $\beta$ have been studied.
\cite{natarajan2000reference} also show that for GLMMs, an improper prior on
the covariance matrix of $u$ obtained by Jefferys' rule results in an
improper posterior for any prior on $\beta$. 
In
\cite{chen2002necessary}, sufficient conditions for the posterior
propriety include that for at least $p$ observations, $y_i>0$ for the
Poisson family and $0<y_i<m_i$ for the binomial family where $m_i$ is
the maximum possible value for the $i$th binomial random variable.
\cite{chen2002necessary} also assume that the sub-matrix of $X$
corresponding to the observations satisfying these conditions has full
rank. In \cite{natarajan2000reference}, sufficient conditions for the
posterior propriety for Poisson GLMMs include these conditions. In
general, \pcite{natarajan2000reference} conditions require some
complex integrals to be finite verifying which seems difficult.

On the other
hand, in Sections \ref{sec:suffi} and \ref{sec:nece} of this article,
we present easily verifiable necessary and sufficient conditions for
posterior propriety for the binomial and Poisson GLMMs.
Sufficient conditions for posterior propriety for a binary GLMM with a
particular prior on the variance of the random effects have been derived in
\pcite{chen2002necessary} Section 4.2. One of their conditions is that
$Z$ is of full column rank. \cite{mich:morr:2016} study posterior propriety
for GLMMs when an `exponentiated norm bound' condition holds for the
likelihood function. \cite{mich:morr:2016} mention that if the likelihood is a log-concave function, and the maximum
likelihood estimator (MLE) exists and is unique, then the
exponentiated norm bound condition holds. 
But, as proved in
\cite[][Theorem 3.1]{chen2001propriety} for binary models, the
necessary and sufficient conditions of posterior propriety and that of
the existence of MLE overlap. Thus, the exponentiated norm bound
condition assumes as much as the conclusion. Further, unless the prior
on the covariance matrix of the random effects is proper, \pcite{mich:morr:2016} conditions require that
$Z$ is of full column rank.

Geometric ergodicity of the Gibbs samplers for
binary logistic mixed models, binary probit mixed models, and normal
linear mixed models with the improper uniform prior on the regression
coefficients and proper or improper priors on the variance components,
has been explored in \cite{rao:roy:2021}, \cite{wang2018convergence}
and \cite{roman2012convergence}, respectively. Note that the geometric
ergodicity of the Markov chain implies that the invariant posterior
density is proper. In all of these papers, the main results on
geometric ergodicity include different conditions on the random effect
matrix $Z$. Furthermore, in \cite{wang2018convergence}, the sufficient
conditions for posterior propriety for binary GLMMs require a matrix
closely related with $Z$ is of full column rank. Here, some sets of our
sufficient conditions on posterior propriety of binomial and Poisson
GLMMs do {\it not} put any restriction on the $Z$ matrix. When $Z$ is
a matrix consisting of only zeros and ones, which is often the case in
practice, the assumption of the full column rank of $Z$ may not hold. Indeed,
in Section \ref{sec:application} of this article, we provide an
example where $Z$ is not of full rank, but the results of this paper
can be used to establish the posterior propriety.

The rest of the manuscript is organized as follows. In
Section \ref{sec:suffi}, we provide sufficient conditions for
posterior propriety for the binomial and Poisson GLMMs. In Section
\ref{sec:nece}, we present the necessary conditions for posterior
propriety for the binomial data, including the special case of binary
data under both cases when $Z$ is full rank or not full rank. In addition, Section
\ref{sec:nece} includes the necessary conditions
for posterior propriety for the general exponential family GLMMs. In
Section \ref{sec:jp}, we derive an approximate Jeffreys' prior for the
parameters in GLMMs. In Section~\ref{sec:application}, we consider examples of one-way and
two-way random effects models and check posterior propriety
by employing the conditions derived in this article. Finally, some
concluding remarks appear in Section~\ref{sec:disc}. Proofs of most of
the theoretical results are given in the Appendix.

\section{Sufficient conditions for posterior propriety for some popular GLMMs} 
\label{sec:suffi}

This section considers posterior propriety for the two most
common non-Gaussian GLMMs: binomial and Poisson GLMMs. Suppose
$x_{i}^\top$ and $z_{i}^\top$ indicate the $i$th row of $X$ and $Z$,
respectively, $i = 1,...,n$. In the case when $Y_{1},Y_{2},...,Y_{n}$
are binomial random variables, we assume the conditional mean
$\E(Y_{i}| \beta, u)$ is related to the linear predictor
$\eta_i = x_{i}^\top \beta + z_{i}^\top u$ in the usual way, that is  
$\E(Y_{i}| \beta, u)= m_{i}F(\eta_{i})$, where $F$ is a cumulative distribution
function (cdf). The two most frequently used binomial GLMMs are
logistic and probit GLMMs, where $F(\cdot)$ is the cdf of the standard
logistic and the standard normal random variables, respectively. If
$(Y_{1},Y_{2},...,Y_{n})$ is the vector of Poisson response variables, we assume the popular log link, that is $\E (Y_{i}| \beta, u) = \exp(\eta_{i})$. Thus, for the binomial GLMMs,
we have
\begin{equation}
\label{eq:disu}
 Y_{i} \mid \beta,u \overset{ind}{\sim} Binomial(m_{i}, F(x_{i}^\top \beta + z_{i}^\top u))  \; \text{for} \; i = 1,\dots,n, 
\end{equation}
and for the Poisson GLMMs, we have
\begin{equation}
\label{eq:dis_poisson}
 Y_{i} \mid \beta,u \overset{ind}{\sim} Poisson(\exp[x_{i}^\top \beta +
z_{i}^\top u])  \;  \text{for} \; i = 1,\dots,n. 
\end{equation} 
Without loss of generality, we assume there are $r$ random effects
$u_{1}^\top, u_{2}^\top,\dots,u_{r}^\top$, where $u_{j}$ is a
$q_{j} \times 1$ vector with $q_{j} > 0$, and
$q_{1} + q_{2} + \dots+q_{r} = q$. Let
$u_{j} \overset{ind}{\sim} N(0, (1/\tau_{j}) \I_{q_{j}})$, where
$\tau_{j} > 0$ for $j=1,\dots,r$. Thus,
$u = (u_{1}^\top,\dots,u_{r}^\top)^\top$ and let
$\tau = (\tau_{1},\dots,\tau_{r})$.

The likelihood function for $(\beta,\tau)$ is
\begin{align}
\label{eq:general_likef}
L(\beta,\tau \mid y)= \int_{\mathbb{R}^{q}}\prod_{i=1}^{n} p(y_{i} \mid \beta,u) \phi_{q}(u;0,D(\tau)^{-1}) du,
\end{align}
where $p(y_{i} \mid \beta,u)$ is the pmf of $y_i$ and
$D(\tau)^{-1} = \oplus_{j=1}^{r} (1/\tau_{j}) \I_{q_{j}}$ with
$\oplus$ denoting the direct sum. Also, $\phi_{q}(u;0,D(\tau)^{-1})$ denotes the probability density function of
the $q$-dimensional normal distribution with mean vector $0$,
covariance matrix $D(\tau)^{-1}$ evaluated at $u$. For binomial GLMMs, $p(y_{i} \mid \beta,u)$ follows from
\eqref{eq:disu} and for Poisson GLMMs, $p(y_{i} \mid \beta,u)$ follows from \eqref{eq:dis_poisson}.

We consider an improper prior on $\beta$,  $\pi(\beta) \propto 1 $. The prior for $\tau_{j}$ is 
\begin{align}
\label{eq:tauprior}
\pi(\tau_{j}) \propto \tau_{j}^{a_{j}-1}e^{-\tau_{j}b_{j}}, \;j = 1,...,r,
\end{align} 
which may be proper or improper depending on the values of $a_{j}$ and
$b_{j}$. Also, we assume that $\beta$ and $\tau$ are apriori
independent and all the $\tau_{j}$s are also apriori
independent. Thus, the posterior density of $(\beta, \tau)$ is
$ \pi(\beta,\tau \mid y) \propto L(\beta,\tau \mid y)\pi(\tau)$ where
$\pi(\tau) = \prod_{j=1}^r \pi(\tau_j)$. Here, we will derive the
conditions on the values of $a_{j}$ and $b_{j}$ for guaranteeing
propriety of $ \pi(\beta,\tau \mid y)$.

\subsection{Sufficient conditions for posterior propriety for binomial data}   
\label{subs:binomial}
For binomial data, the likelihood function for $(\beta, \tau)$ in \eqref{eq:general_likef} becomes
\begin{align*}
L(\beta,\tau \mid y)= \int_{\mathbb{R}^{q}}\bigg[\prod_{i=1}^{n} {m_{i} \choose y_{i}} \big[F(x_{i}^\top \beta + z_{i}^\top u)\big{]}^{y_{i}}\big{[}1- F(x_{i}^\top \beta + z_{i}^\top u) \big]^{m_{i}-y_{i}} \bigg]\phi_{q}(u;0,D(\tau)^{-1}) du.
\end{align*}
Thus the posterior density $
\pi(\beta,\tau \mid y) = L(\beta,\tau \mid y)\pi(\tau)/c(y)$ is proper if and only if 
\begin{align}
\label{eq:cy}
&c(y) = \int_{\mathbb{R}_{+}^{r}} \int_{\mathbb{R}^{p}} \int_{\mathbb{R}^{q}} \bigg[\prod_{i=1}^{n} {m_{i} \choose y_{i}} \big{[}F(x_{i}^\top \beta +z_{i}^\top u)\big{]}^{y_{i}}\big[1- F(x_{i}^\top \beta + z_{i}^\top u) \big]^{m_{i}-y_{i}}\bigg]\nonumber\\  & \hspace{1.5in} \times \phi_{q}(u;0,D(\tau)^{-1}) du \pi(\tau)d\beta d\tau < \infty.
\end{align}
Before we present the results for binomial data, we introduce some
notations. Let $N = \{1,2,...,n\}$. As in \cite{roy2013posterior} \cite[see also][]{chen2004propriety}, we
partition the index set as
$N = I_{1} \biguplus I_{2} \biguplus I_{3}$, where we define
$I_{1} =\{i \in N: y_{i} =0\}, \,I_{2} = \{i \in N: y_{i} = m_{i}\}$,
$I_{3} = \{i \in N: 1 \le y_{i} \le m_{i} -1\}$ and $k$ is the
cardinality of $I_{3}$. Recall that $X$ is the $n \times p$ design matrix
with the $i$th row $x_{i}^\top$. Let $\breve{X}$ be the $k \times p$
matrix with rows $x_{i}^\top$ where $i \in I_{3}$ and the $(n+k) \times p$ matrix
$X_{\triangle} =( X^\top : \breve{X}^\top)^\top$ with the
$i$th row $x_{\triangle i}^\top$. Define $X^*_\triangle$ be the
$(n+k) \times p$ matrix with the $i$th row as
$t_{i} x_{\triangle i}^\top$ where $t_{i}=1$ if
$i \in I_{1}\cup I_{3}$, $t_{i} = -1$ if $i \in I_{2}$ and
$t_{n+j} = -1$ for $j=1,\dots,k$. 
The following Theorem states the
conditions for posterior propriety for binomial data when the prior
for $\beta$ is $\pi(\beta) \propto 1 $ and the prior for $\tau_{j}$ is
as in \eqref{eq:tauprior}.
\begin{theorem}
\label{theosuf}
Assume the following conditions are satisfied
\begin{enumerate}
\item $X$ is of full column rank, and there exists a positive vector $e>0$ such that $e^{\top} X^*_\triangle = 0$;
\item $a_{j} > p/2,\,b_{j} > 0$ for $j = 1,...,r$;
\item    $\E | \delta_\circ|^{p} < \infty$, where $\delta_\circ \sim F$.
\end{enumerate}
Then \eqref{eq:cy} holds.
\end{theorem}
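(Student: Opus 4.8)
The plan is to integrate out the variance components $\tau$ first, then the fixed effects $\beta$, and finally the random effects $u$, with each of the three hypotheses controlling one stage. For the $j$th block, using $\phi_{q_j}(u_j;0,(1/\tau_j)\I_{q_j})\propto \tau_j^{q_j/2}\exp(-\tau_j\|u_j\|^2/2)$ together with the prior \eqref{eq:tauprior}, I would evaluate
\[
\int_0^\infty \tau_j^{q_j/2+a_j-1}\exp\!\big(-\tau_j(b_j+\|u_j\|^2/2)\big)\,d\tau_j\;\propto\;\big(b_j+\|u_j\|^2/2\big)^{-(a_j+q_j/2)},
\]
which is finite and bounded in $u_j$ precisely because $b_j>0$ and $a_j+q_j/2>0$, both guaranteed by condition~2. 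After integrating $\tau$, proving \eqref{eq:cy} reduces to showing $\int_{\mathbb{R}^p}\int_{\mathbb{R}^q}L_0(\beta,u)\,g(u)\,du\,d\beta<\infty$, where $g(u)=\prod_{j=1}^r(b_j+\|u_j\|^2/2)^{-(a_j+q_j/2)}$ and $L_0$ is the binomial likelihood kernel.

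Next I would bound $L_0$ over the partition $N=I_1\uplus I_2\uplus I_3$, discarding the binomial coefficients and the exponents that are $\ge 1$ (for $i\in I_3$ we have $y_i\ge1$ and $m_i-y_i\ge1$):
\[
L_0(\beta,u)\le \prod_{i\in I_1}\big(1-F(\eta_i)\big)\prod_{i\in I_2}F(\eta_i)\prod_{i\in I_3}F(\eta_i)\big(1-F(\eta_i)\big).
\]
Collecting the $n+k$ resulting factors and writing each as $1-F$ or $F(-\,\cdot\,)$ evaluated at the signed predictor, this is exactly $\prod_{l=1}^{n+k}h_l(w_l)$ with $w_l=t_l\,x_{\triangle l}^\top\beta+t_l\,z_{\triangle l}^\top u=(X^*_\triangle\beta)_l+(Z^*_\triangle u)_l$, where $Z^*_\triangle$ is built from $Z$ exactly as $X^*_\triangle$ is built from $X$, and $h_l\in\{1-F,\,F(-\,\cdot\,)\}$. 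By Markov's inequality and condition~3, $h_l(w)\le\min\{1,\,\E|\delta_\circ|^p/w^p\}$ for $w>0$, so each factor is at most $1$ and has an integrable polynomial tail; this is where the moment assumption enters.

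The heart of the argument is the inner integral $\int_{\mathbb{R}^p}\prod_l h_l\big((X^*_\triangle\beta)_l+d_l\big)\,d\beta$ with $d=Z^*_\triangle u$ held fixed. Condition~1 makes it finite: full column rank of $X$ (hence of $X^*_\triangle$) together with a strictly positive $e$ satisfying $e^\top X^*_\triangle=0$ forces $\{d\in\mathbb{R}^p:X^*_\triangle d\le 0\}=\{0\}$, since $e^\top X^*_\triangle d=0$ with $e>0$ would otherwise give $X^*_\triangle d=0$ and then $d=0$ by full rank. Thus the rows of $X^*_\triangle$ positively span $\mathbb{R}^p$, every polytope $\{\beta:X^*_\triangle\beta\le c\}$ is bounded with $\|\beta\|\le C\|c\|$, and in every direction at least one factor $h_l$ enters its decaying regime. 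I would then split $\mathbb{R}^p$ into the bounded region $R(u)=\{\beta:w_l\le T\ \forall l\}$, on which the integrand is $\le 1$ so its contribution is at most $\mathrm{vol}(R(u))\le C(1+\|u\|)^p$, and the complement, on which the polynomial tails from condition~3 render the integral finite. This yields $\int_{\mathbb{R}^p}L_0(\beta,u)\,d\beta\le C(1+\|u\|)^p$, and I would finally integrate this bound against $g(u)$: since $g(u)\sim\prod_j\|u_j\|^{-(2a_j+q_j)}$ at infinity, block-wise integrability of $(1+\|u_j\|)^p\,\|u_j\|^{-(2a_j+q_j)}$ over $\mathbb{R}^{q_j}$ holds iff $2a_j>p$, which is exactly condition~2.

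I expect the main obstacle to be this inner $\beta$-integral: translating condition~1 into genuine integrable decay of the likelihood in \emph{every} direction of $\beta$, and—crucially—tracking that the cost of the random-effects shift $d=Z^*_\triangle u$ grows only polynomially of degree $p$ in $\|u\|$, since it is this exponent that must be absorbed by the requirement $a_j>p/2$. The $\tau$- and $u$-integrations are then routine, and combining the three stages establishes \eqref{eq:cy}.
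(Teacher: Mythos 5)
Your overall architecture matches the paper's proof: integrate $\tau$ first via the gamma integral (using $b_j>0$, $a_j+q_j/2>0$), bound the binomial kernel over $I_1\uplus I_2\uplus I_3$, invoke condition 1 to control the $\beta$-integral, and finish with the $u$-integration, which is finite for the $\|u_j\|^p$-weighted block iff $2a_j>p$ and for the remaining blocks since $a_{j'}>0$. Your derivation that condition 1 forces $\{d:X^*_\triangle d\le 0\}=\{0\}$ and hence the polytope bound $\|\beta\|\le C\|c\|$ is exactly the content of Chen and Shao's Lemma 4.1, which the paper cites. However, there is a genuine gap in the $\beta$-integration step. Replacing each factor by its Markov bound $h_l(w)\le\min\{1,\E|\delta_\circ|^p/w^p\}$ is too lossy: a decay of order exactly $w^{-p}$ is \emph{not} integrable over $\mathbb{R}^p$. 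Because $e^\top X^*_\triangle=0$ with $e>0$, a direction $v$ in which some row of $X^*_\triangle$ has positive inner product must be balanced by rows with negative inner product, and along such rays those factors tend to $1$; in a direction where only one constraint grows (already the case for binary data with $p=1$, $y=(0,1)$, where $X^*=(1,-1)^\top$), your bound on the integrand in a thin cone is of order $t^{-p}$ while the volume element is $t^{p-1}\,dt\,d\sigma$, so the bound integrates to $\int^\infty dt/t=\infty$. The complement-region argument therefore cannot conclude, even though the true integral is finite; Markov's inequality discards exactly the borderline amount of information that condition 3 provides.

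The paper's proof avoids this by never converting the moment condition into a pointwise tail bound. It writes $1-F(x)=\E\,\I(\delta>x)$ and $F(x)=\E\,\I(-\delta\ge-x)$ with $\delta_1,\dots,\delta_{n+k}\stackrel{iid}{\sim}F$, so the whole product of factors becomes a single expectation of the indicator of the random polytope $\{X^*_\triangle\beta\le\delta^*-Z^*_\triangle u\}$; Tonelli swaps $\E$ with $\int d\beta$, Chen and Shao's Lemma 4.1 bounds the volume of that polytope by $l^p\|\delta^*-Z^*_\triangle u\|^p$, and condition 3 then enters in the form $\E\|\delta\|^p<\infty$ (together with $\|\delta^*-Z^*_\triangle u\|^p\le 2^{p-1}(\|\delta\|^p+\lambda^{p/2}\|u\|^p)$), yielding precisely your target bound $\int L_0\,d\beta\le C(1+\|u\|)^p$. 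Equivalently, you could repair your route by keeping the tail probabilities intact and using the identity $\E[\delta_+^p]=p\int_0^\infty \Pr(\delta>s)\,s^{p-1}\,ds$ when integrating over the cone where a single constraint is active; but the pointwise bound $M/w^p$, as written, cannot be made to work.
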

\begin{remark}
  The condition $1$ in Theorem~\ref{theosuf} can be easily verified by
  a simple optimization method \citep{roy2007convergence} using
  publicly available software packages (see
  Section~\ref{sec:application} for details). Also, for the probit or
  logistic GLMMs, the condition $3$ is satisfied as moments of all
  orders for the normal and logistic random variables are finite.
\end{remark}
\begin{remark}
  Posterior propriety for nonlinear mixed models is studied in
  \cite{chen2002necessary}. For binomial data, the sufficient
  conditions in that paper include $0<y_{i}<m_{i}$ for at least $p$
  observations with the corresponding sub-matrix of $X$ being full
  rank. These conditions do not hold for the examples in
  Section~\ref{sec:application}, whereas the conditions of
  Theorem~\ref{theosuf} are
  satisfied. 
\end{remark}
\begin{remark}
  \cite{sun2001propriety} investigate the necessary and sufficient
  conditions for posterior propriety for linear mixed models. In their
  Theorem $2$, the sufficient conditions include rank conditions for
  $Z$ and some conditions on $q_{j}$ while we do not have any
  conditions on $Z$ or $q_j$ in Theorem \ref{theosuf}.
\end{remark}
\begin{remark}
  Sufficient conditions for geometric ergodicity of certain Gibbs
  samplers have been established for linear mixed models
  \citep{roman2012convergence} and GLMMs \citep{wang2018convergence,
    rao:roy:2021}. These conditions also imply posterior propriety, and
  the conditions involve $Z$ and $q_{j}$. On the other hand, Theorem
  \ref{theosuf} here does not put any constraints on $Z$ or $q_{j}$.
\end{remark}
\begin{remark}
  Under the improper uniform prior on $\beta$ and proper priors on
  variance components, \cite{mich:morr:2016} provide conditions for
  posterior propriety that does not put restrictions on the rank of $Z$. But,
  \cite{mich:morr:2016} assume the exponentiated norm bound
  condition. 
\end{remark}
\subsubsection{Sufficient conditions for posterior propriety for binary data} 
\label{subs:binary}
Although the Bernoulli distribution is a special case of the binomial
distribution, the need for analyzing binary data is ubiquitous. Thus,
in this section, we separately state the sufficient conditions for
posterior propriety for the important special case when $y_{i}=0$ or
$1$, $i=1,\dots,n$, the prior for $\beta$ is $\pi(\beta) \propto 1 $ and the prior for
$\tau_{j}$ is as in \eqref{eq:tauprior}.
\begin{corollary}
\label{corsuf}
Assume the following conditions are satisfied
\begin{enumerate}
\item $X$ is of full column rank, and there exists a positive vector $e>0$ such that $e^{\top} X^{*} = 0$ where $X^{*}$ is an $n \times p$ matrix with the $i$th row $c_{i} x_{i}^{\top}$ and $c_{i} = 1-2y_{i}$, $i=1,\dots,n$;
\item $a_{j} > p/2,\,b_{j} > 0$ for $j = 1,...,r$;
\item    $\E |\delta_\circ|^{p} < \infty$, where $\delta_\circ \sim F$.
\end{enumerate}
Then \eqref{eq:cy} holds for binary responses.
\end{corollary}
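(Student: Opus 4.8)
The plan is to derive Corollary~\ref{corsuf} as an immediate specialization of Theorem~\ref{theosuf} to the case $m_i = 1$ for every $i$, since binary data is precisely the binomial setting in which all $m_i$ equal one. The strategy is to show that, under this specialization, the three hypotheses of the theorem reduce verbatim to the three stated conditions of the corollary, after which \eqref{eq:cy} follows directly from the theorem.

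First I would examine the index partition. With $m_i = 1$, the set $I_3 = \{i \in N : 1 \le y_i \le m_i - 1\}$ requires $1 \le y_i \le 0$, which is impossible, so $I_3 = \emptyset$ and hence its cardinality $k = 0$. Consequently $\breve{X}$ has no rows, the augmented matrix $X_\triangle = (X^\top : \breve{X}^\top)^\top$ collapses to $X$ itself, and the appended rows indexed by $t_{n+j}$ for $j = 1, \dots, k$ are vacuous. Thus $X^*_\triangle$ is simply the $n \times p$ matrix whose $i$th row is $t_i x_i^\top$, with $t_i = 1$ for $i \in I_1$ and $t_i = -1$ for $i \in I_2$.

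Next I would match the sign coefficients. For binary data $I_1 = \{i : y_i = 0\}$ and $I_2 = \{i : y_i = 1\}$, so $t_i = 1$ when $y_i = 0$ and $t_i = -1$ when $y_i = 1$; this is exactly $t_i = 1 - 2y_i = c_i$. Hence $X^*_\triangle = X^*$, and condition~1 of Theorem~\ref{theosuf} becomes verbatim condition~1 of the corollary. Conditions~2 and~3 are identical in the two statements and require no modification.

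Finally, having confirmed that all hypotheses of Theorem~\ref{theosuf} are in force, I would invoke the theorem to conclude that $c(y)$ in \eqref{eq:cy} is finite for binary responses. The only point demanding care is the bookkeeping of the sign convention, namely verifying that the $t_i$ prescribed by the theorem's construction agrees with $c_i = 1 - 2y_i$; but this is a routine check rather than a genuine obstacle, so the corollary is a direct consequence of the theorem.
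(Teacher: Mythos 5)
Your proposal is correct and matches the paper's own argument: the paper likewise derives the corollary from Theorem~\ref{theosuf} by observing that for binary responses $I_3=\emptyset$, so $k=0$ and $X^*_\triangle$ reduces to $X^*$. Your additional check that $t_i = 1-2y_i = c_i$ is exactly the sign bookkeeping implicit in the paper's one-line proof.
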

\begin{remark}
  In \pcite{chen2002necessary} Theorem $4.2$, the sufficient
  conditions for posterior propriety for binary data include that
  $(X, Z)$ is of full column rank, and there exists a positive vector $e$ such
  that $e^{\top}(X^*, Z^*)=0$. Here, $(X^*, Z^*)$ is defined in the same
  way as our $X^*$. Since, in practice, $Z$ is often a matrix
  consisting of $0$'s and $1$'s, the full column rank assumption of $Z$ may
  not hold. Furthermore, in \cite{wang2018convergence}, the sufficient
  conditions for posterior propriety for binary data include $W$,
  which is closely related to $(X, Z)$, is of full column rank, and there
  exists a positive vector $e$ such that $e^{\top}W^*=0$. Note that
  $W^*$ is defined similarly to our $X^*$. It also includes some
  conditions on $a_{j}$'s and $q_{j}$'s. In the above Corollary, we do
  not assume any conditions on the $Z$ matrix or $q_{j}$'s.
\end{remark}

The proof of the Corollary~\ref{corsuf} can be derived from the proof of Theorem \ref{theosuf} since $X^*_\triangle$ becomes $X^{*}$ and $k=0$ in the case of binary responses.  

\subsection{Sufficient conditions for posterior propriety for Poisson GLMMs} 
\label{subs:poisson} 
For Poisson GLMMs, since the log link is used, the likelihood function for $(\beta,\tau)$ is 
\begin{align*}
L(\beta,\tau \mid y)= \int_{\mathbb{R}^{q}}\prod_{i=1}^{n} \frac{\exp[-\exp(x_{i}^\top \beta + z_{i}^\top u)]\exp[(x_{i}^\top \beta + z_{i}^\top u)y_{i}]}{y_{i} !} \phi_{q}(u;0,D(\tau)^{-1}) du.
\end{align*}
Thus, if
\begin{align}
\label{eq:cy_poisson}
&c(y) = \int_{\mathbb{R}_{+}^{r}} \int_{\mathbb{R}^{p}} \int_{\mathbb{R}^{q}} \prod_{i=1}^{n} \frac{\exp[(x_{i}^\top \beta + z_{i}^\top u)y_{i}]}{\exp[\exp(x_{i}^\top \beta + z_{i}^\top u)] y_{i} !} \phi_{q}(u;0,D(\tau)^{-1}) du \pi(\tau)d\beta d\tau <\infty, 
\end{align}
the posterior density of $(\beta, \tau)$ is proper. In Lemma~\ref{suf_poisson}, we provide sufficient conditions for posterior propriety for Poisson data when the prior for $\beta$ is $\pi(\beta) \propto 1 $ and the prior for $\tau_{j}$ is as in \eqref{eq:tauprior}. Let $y_{(n)} =\max(y_{1},y_{2},\dots,y_{n})$. Letting $y_{(n)} = m_i, i= 1,\dots,n$,  we define $X^*_\triangle $ as that in Section \ref{subs:binomial}. 
\begin{lemma}
\label{suf_poisson}
Assume the following conditions are satisfied
\begin{enumerate}
\item $X$ is of full column rank, and there exists a positive vector $e>0$ such that $e^{\top} X^*_\triangle = 0$;
\item $a_{j} > p/2,\,b_{j} > 0$ for $j = 1,...,r$.
\end{enumerate}
Then \eqref{eq:cy_poisson} holds for Poisson GLMMs with the log link.
\end{lemma}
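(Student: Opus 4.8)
\emph{Proof plan.} The guiding idea is to dominate the Poisson integrand in \eqref{eq:cy_poisson} by a constant multiple of the integrand arising from a binomial \emph{logistic} GLMM with $m_i = y_{(n)}$, and then to invoke Theorem~\ref{theosuf}. Write $M = y_{(n)}$ and $\eta_i = x_i^\top\beta + z_i^\top u$. The only feature of the Poisson integrand that differs structurally from the binomial one is the factor $\exp[-\exp(\eta_i)]$; the numerators $\exp(\eta_i y_i)$ already match the logistic binomial numerator. So the whole argument reduces to absorbing $\exp[-\exp(\eta_i)]$ into the logistic denominator up to a multiplicative constant.

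First I would establish a pointwise bound. Setting $t=e^{\eta}\ge 0$, a one-variable calculus argument shows that $\sup_{t\ge 0}(1+t)^{M}e^{-t}=:C<\infty$ (the supremum is attained at $t=\max(M-1,0)$ and equals $M^{M}e^{-(M-1)}$ when $M\ge 1$). Equivalently, for every real $\eta$,
\begin{equation*}
\exp[-\exp(\eta)] \le C\,(1+e^{\eta})^{-M}.
\end{equation*}
Multiplying both sides by $e^{\eta y_i}$ and recalling that the logistic cdf satisfies $F(\eta)=e^{\eta}/(1+e^{\eta})$ and $1-F(\eta)=1/(1+e^{\eta})$, this yields, for each $i$ and every $\eta_i$,
\begin{equation*}
\frac{\exp(\eta_i y_i)}{\exp[\exp(\eta_i)]} \le C\,\frac{\exp(\eta_i y_i)}{(1+e^{\eta_i})^{M}} = C\, F(\eta_i)^{y_i}\bigl(1-F(\eta_i)\bigr)^{M-y_i},
\end{equation*}
which is legitimate since $0\le y_i\le M$ for all $i$.

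Next I would take the product over $i=1,\dots,n$ and rewrite the right-hand side in binomial form. Pulling out the coefficients $\binom{M}{y_i}$, the Poisson integrand is bounded above by $\widetilde C\,\prod_{i=1}^{n}\binom{M}{y_i}F(\eta_i)^{y_i}(1-F(\eta_i))^{M-y_i}$, where $\widetilde C = C^{n}\prod_{i=1}^{n}\bigl[y_i!\,\binom{M}{y_i}\bigr]^{-1}$ is a finite positive constant not depending on $(\beta,u,\tau)$. Integrating this bound against $\phi_q(u;0,D(\tau)^{-1})\,\pi(\tau)$ over $u,\beta,\tau$ gives $c(y)\le \widetilde C\, c_{\mathrm{bin}}(y)$, where $c_{\mathrm{bin}}(y)$ is precisely the normalizing constant \eqref{eq:cy} of the binomial logistic GLMM with $m_i=M$ and the same priors on $\beta$ and $\tau$.

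Finally I would check that Theorem~\ref{theosuf} applies to this comparison model, so that $c_{\mathrm{bin}}(y)<\infty$. Its condition~1 is identical to condition~1 of the present lemma, because the partition $I_1,I_2,I_3$ and hence $X^{*}_\triangle$ are built from the $y_i$ and $m_i=M=y_{(n)}$ exactly as prescribed just before the statement; its condition~2 is condition~2 here; and its condition~3 holds since $\delta_\circ$ following the logistic distribution has finite moments of all orders, in particular $\E|\delta_\circ|^{p}<\infty$. Hence $c_{\mathrm{bin}}(y)<\infty$, and therefore \eqref{eq:cy_poisson} holds. The crux of the argument is the uniform domination in the second step, which is what lets the log-link Poisson problem be reduced to the already-resolved logistic binomial problem; everything after that is bookkeeping.
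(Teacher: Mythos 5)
Your proposal is correct and follows essentially the same route as the paper's own proof: both dominate the Poisson integrand via the pointwise inequality $\exp[-\exp(\eta)] \le C\,(1+e^{\eta})^{-y_{(n)}}$ (with the same optimal constant $y_{(n)}^{y_{(n)}}e^{1-y_{(n)}}$), thereby reducing \eqref{eq:cy_poisson} to the binomial logistic normalizing constant \eqref{eq:cy} with $m_i = y_{(n)}$, and then invoke Theorem~\ref{theosuf} using the fact that the logistic distribution has all moments finite. The only differences are cosmetic (you optimize $(1+t)^{M}e^{-t}$ over $t\ge 0$ rather than differentiating $g(w)=e^{w}-y_{(n)}\log(1+e^{w})$, and your constant $\widetilde C$ packages the paper's factor $B$ together with $C^{n}$).
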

\begin{remark}
 In their sufficient conditions for
  posterior propriety for Poisson GLMMs, both \pcite{chen2002necessary} Theorem $3.1$ and  \pcite{natarajan2000reference} Corollary $1$ include $y_{i} >0$ for  at least $p$ observations. They also require the
  sub-matrix of $X$ corresponding to these observations to have full
  rank. These conditions do not hold for the Poisson GLMMs examples in
  Section~\ref{sec:application}, whereas the conditions of
  Lemma~\ref{suf_poisson} are
  satisfied. 
\end{remark}

\subsection{Sufficient conditions for posterior propriety for binomial and Poisson GLMMs with power priors} 
\label{subs:powerprior}
In this Section, we assume $b_{j}=0$ in \eqref{eq:tauprior}, that is,
$\tau_{j}$'s have the so-called power priors,
$\pi(\tau_j) \propto \tau_j^{a_j - 1}, j=1,\dots,r$. As mentioned in
\cite{roman2012convergence}, for the two-level normal model, the
standard diffuse prior on $\tau_j$ is
$\pi(\tau_j) \propto \tau_j^{-1/2 - 1}$, which is among the priors
recommended by \cite{gelm:2006}. For the prior on $\beta$, we continue
to assume $\pi(\beta) \propto 1$. Under these priors,
Theorem~\ref{theosuf_powerprior} provides sufficient conditions for
posterior propriety for binomial responses. Recall the notation
$I_3, t_i's$ and the matrix $X^*_\triangle$ as defined in Section
\ref{subs:binomial}. Also recall that $Z$ is the $n \times q$ random
effect design matrix with the $i$th row $z_{i}^\top$. Define
$\breve{Z}$ be the $k \times q$ matrix with rows $z_{i}^\top$ where
$i \in I_{3}$ and $Z_{\triangle} = (Z^\top:\breve{Z}^\top)^\top$ with
the $i$th row $z_{\triangle i}^\top$. Let $Z^*_{\triangle}$ be a
$(n+k) \times q$ matrix with the $i$th row as
$t_{i} z_{\triangle i}^\top$.
\begin{theorem}
\label{theosuf_powerprior}
Assume the following conditions are satisfied
\begin{enumerate}
\item $(X, Z)$ is of full column rank, and there exists a positive vector $e>0$ such that $e^{\top} (X^*_\triangle,Z^*_\triangle) = 0$;
\item $-q_{j}/2 < a_{j} < 0$ for $j = 1,...,r$;
\item    $\E |\delta_\circ|^{p-2 \sum_{j=1}^{r}  a_{j}} < \infty$, where $\delta_\circ \sim F$.
\end{enumerate}
Then \eqref{eq:cy} holds.
\end{theorem}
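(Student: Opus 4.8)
The plan is to establish the finiteness of $c(y)$ in \eqref{eq:cy} by a three-stage reduction followed by a single multidimensional tail estimate, paralleling the proof of Theorem~\ref{theosuf} but carefully tracking the singular weight produced by the power prior. First, since the integrand in \eqref{eq:cy} is nonnegative, I would invoke Tonelli's theorem to integrate out $\tau$ before $\beta$ and $u$. With $b_j=0$ and $\pi(\tau_j)\propto\tau_j^{a_j-1}$, the inner integral $\int_0^\infty \tau_j^{q_j/2+a_j-1}\exp(-\tau_j\|u_j\|^2/2)\,d\tau_j$ equals $\Gamma(q_j/2+a_j)\,(\|u_j\|^2/2)^{-(q_j/2+a_j)}$, which is finite and positive precisely because the lower bound $a_j>-q_j/2$ in condition $2$ gives $q_j/2+a_j>0$. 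This collapses the $\tau$-integration into a weight $\prod_{j=1}^{r}\kappa_j\,\|u_j\|^{-(q_j+2a_j)}$; writing $\gamma_j:=q_j+2a_j$, condition $2$ reads exactly $0<\gamma_j<q_j$.

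Second, I would bound the binomial likelihood. Using $0\le F\le 1$, each factor $[F(\eta_i)]^{y_i}[1-F(\eta_i)]^{m_i-y_i}$ is dominated by $1-F(\eta_i)$ for $i\in I_1$, by $F(\eta_i)$ for $i\in I_2$, and by $F(\eta_i)(1-F(\eta_i))$ for $i\in I_3$; splitting the $I_3$ product into its two factors is what duplicates the $k$ rows and produces the augmented matrices. Writing $\theta=(\beta^\top,u^\top)^\top$ and $W^*_\triangle=(X^*_\triangle,Z^*_\triangle)$ with signed rows $t_\iota w^\top_{\triangle\iota}$, each surviving factor is a decreasing function $G_\iota$ of the signed predictor $s_\iota:=(W^*_\triangle\theta)_\iota$ (namely $1-F(s_\iota)$ or $F(-s_\iota)$), with $G_\iota(s)\to 0$ as $s\to+\infty$. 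Hence it suffices to show
\[
J:=\int_{\mathbb{R}^{p+q}}\Big[\prod_{\iota=1}^{n+k}G_\iota(s_\iota)\Big]\prod_{j=1}^{r}\|u_j\|^{-\gamma_j}\,d\theta<\infty .
\]

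Third comes the main estimate, where conditions $1$ and $3$ enter. The first part of condition $1$ makes $(X,Z)$, and therefore $W^*_\triangle$ (its columns are only sign-flipped and row-duplicated), of full column rank $p+q$, so $\theta\mapsto s$ is injective; the existence of a positive $e$ with $e^\top W^*_\triangle=0$ forces $\sum_\iota e_\iota s_\iota(\theta)=0$ for every $\theta$, so for $\theta\neq 0$ the vector $s(\theta)$ cannot be nonpositive in all coordinates. By continuity and compactness of the unit sphere this upgrades to a uniform bound $\max_\iota s_\iota(\theta)\ge\delta_0\|\theta\|$ for some $\delta_0>0$, so along every direction at least one likelihood factor decays. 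Controlling that factor through the moment bound from condition $3$, namely $1-F(t)\le \E|\delta_\circ|^{m}/t^{m}$ (and symmetrically for the lower tail) with $m=p-2\sum_j a_j$, I would reduce $J$ to a weighted power integral; integrating $\beta$ out of $\mathbb{R}^p$ lowers the tail exponent by $p$, after which the radial behavior in $u$ is governed by $q-(m-p)-\sum_j\gamma_j$. Since $\sum_j\gamma_j=q+2\sum_j a_j$, the moment exponent $m=p-2\sum_j a_j$ is exactly the net dimension $p+q-\sum_j\gamma_j$ that must be absorbed.

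The main obstacle is this last, sharp balance. The weights $\|u_j\|^{-\gamma_j}$ are not constant along the $s$-coordinates, so the clean device of changing variables to $p+q$ independent signed predictors and factoring the integral (as is natural in Theorem~\ref{theosuf}, where the weight is bounded) does not apply directly; the weights must be kept in the original $u$-block coordinates while the likelihood decay is extracted from condition $1$. Near $u_j=0$ the strict upper bound $a_j<0$, i.e.\ $\gamma_j<q_j$, is exactly what renders each block singularity locally integrable, whereas at infinity the moment condition is borderline, so the argument must retain slightly more decay than a single likelihood factor (for instance by keeping the full product $\prod_\iota G_\iota$ or by exploiting the strict little-$o$ improvement in Markov's bound) to close the estimate. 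Establishing this quantitative domination, rather than the bookkeeping of the first two stages, is where the real work lies; the binary case follows as before by taking $k=0$ and $X^*_\triangle=X^*$.
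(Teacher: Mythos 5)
Your route is genuinely different from the paper's, and its skeleton is viable, but the decisive estimate is left open and the two devices you float to close it do not work as stated. For contrast: the paper never integrates $\tau$ out first. It writes $1-F$ and $F$ as expectations of indicators over latent $\delta_i\stackrel{iid}{\sim}F$, applies \pcite{chen2001propriety} Lemma 4.1 to $(X^*_\triangle,Z^*_\triangle)$ to get $\I\{\|(\beta^\top,u^\top)\|\le l'\|\delta^*\|\}$, integrates $\beta$ (yielding $\|\delta^*\|^p$ and retaining the ball constraint on $u$), bounds each block integral $\int_{\mathbb{R}^{q_j}}\I\{\|u_j\|\le l'\|\delta\|\}\tau_j^{q_j/2}e^{-\tau_j u_j^\top u_j/2}du_j$ by $\min\bigl(c_1,c_2\,\tau_j^{q_j/2}\|\delta\|^{q_j}\bigr)$, and splits the $\tau_j$-integral at $l_1\propto\|\delta\|^{-2}$; both strict inequalities $-q_j/2<a_j<0$ are used in that split, each block contributes $\|\delta\|^{-2a_j}$, and condition $3$ finishes via $\E\|\delta\|^{p-2\sum_j a_j}<\infty$. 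Keeping the expectation over $\delta$ outside is what lets the paper sidestep the borderline radial integral that your ordering creates.

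The gap in your proposal is precisely at that borderline, and neither suggested repair suffices. With $m=p-2\sum_j a_j$ and $\gamma_j=q_j+2a_j$, your integrand $\min(1,C\|\theta\|^{-m})\prod_j\|u_j\|^{-\gamma_j}$ is homogeneous of degree $-m-\sum_j\gamma_j=-(p+q)$ at infinity, so in polar coordinates the radial integral is $\int^\infty\rho^{-1}d\rho=\infty$: the pointwise Markov bound gives exactly a logarithmic divergence. Keeping the full product $\prod_\iota G_\iota$ cannot help, because $e^\top(X^*_\triangle,Z^*_\triangle)=0$ with $e>0$ forces $\sum_\iota e_\iota s_\iota=0$, so whenever one $s_\iota$ is large and positive some others are negative with their factors near $1$; there is a positive-measure cone of directions on which only one factor decays. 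The little-$o$ refinement $1-F(t)=o(t^{-m})$ is also not enough, since $\epsilon(t)\to0$ alone does not give $\int^\infty\epsilon(t)t^{-1}dt<\infty$. What does close the argument is the layer-cake form of condition $3$: writing $\bar H(t)=P(|\delta_\circ|\ge t)$, one has $\E|\delta_\circ|^m=m\int_0^\infty t^{m-1}\bar H(t)\,dt<\infty$. Using $\max_\iota s_\iota(\theta)\ge\delta_0\|\theta\|$ (your compactness argument, which is correct and equivalent to the Chen--Shao lemma) to bound the product by $\min\bigl(1,\bar H(\delta_0\|\theta\|)\bigr)$, polar coordinates in $\mathbb{R}^{p+q}$ give
\begin{align*}
J \;\le\; \Bigl[\int_{S^{p+q-1}}\prod_{j=1}^{r}\|\Omega_{u_j}\|^{-\gamma_j}\,d\Omega\Bigr]
\int_0^\infty \min\bigl(1,\bar H(\delta_0\rho)\bigr)\,\rho^{\,p+q-1-\sum_j\gamma_j}\,d\rho ,
\end{align*}
where the angular integral is finite because $\gamma_j<q_j$ (codimension $q_j$ singularities, i.e.\ your $a_j<0$), and the radial exponent equals $m-1$, so the radial integral is at most $\rho_0^m/m+\delta_0^{-m}\int_0^\infty t^{m-1}\bar H(t)\,dt<\infty$ by condition $3$. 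So your strategy is salvageable and arguably more direct than the paper's, but the closing step must invoke the integrated-tail (layer-cake) identity, not Markov's inequality or its little-$o$ refinement; as written, the proposal does not contain a correct proof of the one estimate it identifies as "where the real work lies."
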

\begin{remark}
For the power prior on the variance components of the random effects in GLMMs, \cite{chen2002necessary} examine the sufficient conditions for posterior propriety for binary data in their Theorem $4.2$. Here, we derive sufficient conditions for posterior propriety for binomial data in GLMMs. The conditions in Theorem~\ref{theosuf_powerprior} are similar to those in \pcite{chen2002necessary} Theorem $4.2$ for binary data. 
\end{remark}
\begin{remark}
  In \pcite{wang2018convergence} Theorem $1$, the sufficient
  conditions for posterior propriety for binary data in GLMMs with the
  power prior on the variance components for the random effects have
  also been studied. Their conditions on $(a_j, q_j)$ are $a_{j} < 0$, $q_{j} \ge 2$
  and $a_{j} + q_{j}/2 > 1/2$ for $j=1,\dots, r$. On the other hand,
  the condition $2$ of Theorem~\ref{theosuf_powerprior} matches the
  hyperparameter condition in \cite{nata:mccu:1995} who consider
  single variance component Bernoulli
  GLMM with known $\beta$. 
\end{remark}
For binary data, let $Z^{*}$ be the $n \times q$ matrix with the $i$th
row $c_{i} z_{i}^{\top}$ and $c_{i} = 1-2y_{i}$, $i=1,\dots,n$. We thus
have the following Corollary.
\begin{corollary}
\label{cor_suf_binary_power}
Assume the following conditions are satisfied
\begin{enumerate}
\item $(X, Z)$ is of full column rank, and there exists a positive vector $e>0$ such that $e^{\top} (X^*,Z^*) = 0$;
\item $-q_{j}/2<a_{j} < 0$ for $j = 1,...,r$;
\item    $\E |\delta_\circ|^{p-2 \sum_{j=1}^{r}  a_{j}} < \infty$, where $\delta_\circ \sim F$.
\end{enumerate}
Then \eqref{eq:cy} holds for binary responses.
\end{corollary}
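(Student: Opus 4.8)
The plan is to recognize that binary responses form the special case of binomial data in which $m_i = 1$ for every $i$, and then to invoke Theorem~\ref{theosuf_powerprior} directly rather than reproducing its argument, exactly as Corollary~\ref{corsuf} was obtained from Theorem~\ref{theosuf}. The first step is to trace the partition $N = I_1 \biguplus I_2 \biguplus I_3$ under the constraint $m_i = 1$. Here $I_1 = \{i : y_i = 0\}$ and $I_2 = \{i : y_i = m_i = 1\}$ together exhaust $N$, while $I_3 = \{i : 1 \le y_i \le m_i - 1\} = \{i : 1 \le y_i \le 0\}$ is empty; consequently the cardinality $k = |I_3| = 0$.

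With $k = 0$, the second step is to observe that the augmenting blocks $\breve{X}$ and $\breve{Z}$ are absent, so $X_\triangle = X$ and $Z_\triangle = Z$ (both carry no extra rows). It then remains to check that the sign pattern $t_i$ appearing in $X^*_\triangle$ and $Z^*_\triangle$ coincides with the binary sign $c_i = 1 - 2y_i$ of the Corollary: when $i \in I_1$ we have $y_i = 0$ and $t_i = 1 = 1 - 2\cdot 0 = c_i$, while when $i \in I_2$ we have $y_i = 1$ and $t_i = -1 = 1 - 2\cdot 1 = c_i$. Because there are no rows indexed $n+j$ (as $k = 0$), this covers all rows, so $X^*_\triangle = X^*$ and $Z^*_\triangle = Z^*$ precisely as defined in Corollary~\ref{cor_suf_binary_power}.

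The third step is to match the remaining hypotheses and the target integral. Conditions~$2$ and $3$ of Theorem~\ref{theosuf_powerprior} are stated verbatim as conditions~$2$ and $3$ of the Corollary, so nothing further is required there. In the integrand of \eqref{eq:cy} the binomial coefficient $\binom{m_i}{y_i} = \binom{1}{y_i} = 1$ drops out, so the quantity $c(y)$ whose finiteness must be established is literally the binomial $c(y)$ evaluated at $m_i = 1$. Hence condition~$1$ of Theorem~\ref{theosuf_powerprior}, specialized to $m_i = 1$, is exactly condition~$1$ of the Corollary, and the finiteness of \eqref{eq:cy} follows immediately.

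Since the entire argument is a specialization, there is no genuine analytic obstacle and no new estimate is needed; the only point requiring care is the bookkeeping confirming that $I_3 = \emptyset$ forces $k = 0$, that the augmented matrices collapse to $X$ and $Z$, and that the sign assignment $t_i$ agrees with $1 - 2y_i$ on $I_1 \cup I_2$. Once that identification is in place, Corollary~\ref{cor_suf_binary_power} is an immediate consequence of the power-prior binomial result.
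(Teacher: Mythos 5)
Your proposal is correct and matches the paper's own (implicitly stated) argument: the paper derives this corollary from Theorem~\ref{theosuf_powerprior} by exactly the specialization you describe, noting that for binary responses $I_3=\emptyset$, so $k=0$, $X^*_\triangle$ and $Z^*_\triangle$ collapse to $X^*$ and $Z^*$, and the sign pattern $t_i$ coincides with $c_i=1-2y_i$. Your bookkeeping of the partition and the sign identification is precisely the content of the paper's one-line justification.
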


For Poisson GLMMs with the log link, as in Section \ref{subs:poisson}, we let $y_{(n)}= m_i$ for $i=1,\dots,n$  and define $(X^*_\triangle,Z^*_\triangle)$ following the notations in Section \ref{subs:binomial}. Also, using the relationship between the Poisson and binomial likelihoods as in the proof of Lemma~\ref{suf_poisson}, we have the following result.
\begin{lemma}
\label{cor_suf_poisson_power}
Assume the following conditions are satisfied
\begin{enumerate}
\item $(X, Z)$ is of full column rank, and there exists a positive vector $e>0$ such that $e^{\top} (X^*_\triangle,Z^*_\triangle) = 0$;
\item $-q_{j}/2 < a_{j} < 0$ for $j = 1,...,r$.
\end{enumerate}
Then \eqref{eq:cy_poisson} holds for Poisson GLMMs with the log link.
\end{lemma}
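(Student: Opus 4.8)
The plan is to mirror the proof of Lemma~\ref{suf_poisson}, replacing the appeal to the proper-prior binomial result by the power-prior binomial result, Theorem~\ref{theosuf_powerprior}. As prescribed in the statement, I set $m_i = y_{(n)}$ for every $i$ and compare the Poisson kernel against the binomial kernel associated with the logistic link $F(\eta) = e^{\eta}/(1+e^{\eta})$. Writing $\lambda_i = \exp(x_i^\top\beta + z_i^\top u)$, the logistic binomial term in \eqref{eq:cy} equals $\binom{m_i}{y_i}\lambda_i^{y_i}/(1+\lambda_i)^{m_i}$, while the Poisson term in \eqref{eq:cy_poisson} equals $\lambda_i^{y_i}e^{-\lambda_i}/y_i!$. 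Hence the ratio of the Poisson term to the binomial term is $e^{-\lambda_i}(1+\lambda_i)^{m_i}/(y_i!\binom{m_i}{y_i})$, which reduces the whole comparison to controlling the single univariate function $g(\lambda)=e^{-\lambda}(1+\lambda)^{y_{(n)}}$.

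The main step, and the one I expect to be the principal obstacle, is to establish the pointwise domination uniformly in $(\beta,u)$ and across all three index classes. The function $g$ is bounded on $[0,\infty)$: it equals $1$ at $\lambda=0$ and tends to $0$ as $\lambda\to\infty$, so $C_0 := \sup_{\lambda\ge 0} g(\lambda) < \infty$. One must check that this bound governs the ratio for $i\in I_1$ ($y_i=0$), $i\in I_3$, and the boundary indices $i\in I_2$ where $m_i - y_i = 0$; in each case the ratio is $e^{-\lambda_i}(1+\lambda_i)^{y_{(n)}}/(y_i!\binom{m_i}{y_i}) \le C_0/(y_i!\binom{m_i}{y_i})$. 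Consequently the integrand in \eqref{eq:cy_poisson} is dominated pointwise by a finite constant $C = \prod_{i=1}^n C_0/(y_i!\binom{m_i}{y_i})$ times the logistic binomial integrand in \eqref{eq:cy}; since $n$ is finite, $C<\infty$. Integrating over $u$, then $\beta$, then $\tau$ gives $c(y)$ of \eqref{eq:cy_poisson} bounded by $C$ times $c(y)$ of \eqref{eq:cy}.

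It then remains to verify the three hypotheses of Theorem~\ref{theosuf_powerprior} for this logistic binomial comparison problem with $m_i=y_{(n)}$. Condition 1 is literally hypothesis 1 of Lemma~\ref{cor_suf_poisson_power}, because $X^*_\triangle$ and $Z^*_\triangle$ are built from the same partition $N = I_1 \biguplus I_2 \biguplus I_3$ with $m_i = y_{(n)}$; condition 2 on the $a_j$ is identical as well. The only genuinely new point is condition 3, namely $\E|\delta_\circ|^{\,p - 2\sum_{j=1}^r a_j} < \infty$ with $\delta_\circ\sim F$, and this is precisely where the logistic choice pays off: a logistic random variable has finite moments of every order, so the required moment, although of elevated order $p - 2\sum_j a_j > p$ since each $a_j<0$, is automatically finite. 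This also explains why Lemma~\ref{cor_suf_poisson_power} needs no explicit third condition, unlike its binomial counterpart. Theorem~\ref{theosuf_powerprior} therefore yields finiteness of $c(y)$ in \eqref{eq:cy} for the comparison problem, and the domination gives \eqref{eq:cy_poisson}, completing the argument.
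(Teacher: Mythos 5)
Your proposal is correct and follows essentially the same route as the paper: the paper's own argument also reduces \eqref{eq:cy_poisson} to the logistic binomial problem with $m_i = y_{(n)}$ via the bound $(1+e^{w})^{y_{(n)}} \le d\, e^{e^{w}}$ (your $C_0 = \sup_{\lambda \ge 0} e^{-\lambda}(1+\lambda)^{y_{(n)}}$, which the paper computes explicitly as $e^{1-y_{(n)}} y_{(n)}^{y_{(n)}}$ by calculus), and then invokes Theorem~\ref{theosuf_powerprior}, with the moment condition $\E|\delta_\circ|^{p-2\sum_j a_j}<\infty$ holding automatically because the logistic distribution has finite moments of all orders. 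The only cosmetic differences are your continuity-plus-limits justification of the sup and the (unneeded) case check over $I_1, I_2, I_3$.
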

\section{Necessary conditions for posterior propriety for GLMMs}
\label{sec:nece}
\subsection{Necessary conditions for posterior propriety for binomial data}   
\label{subs:nec_binomial}
In this section, we discuss necessary conditions for posterior propriety for binomial data when the prior for $\beta$ is $\pi(\beta) \propto 1 $ and the prior for $\tau_{j}$ is as in \eqref{eq:tauprior} with $b_j \geq 0$ and the values of $a_j$ specified in the results of this section. 
\begin{theorem}
\label{theo_nece1}
For binomial data, the following two conditions are necessary for the posterior propriety, that is, for \eqref{eq:cy} to hold: 
\begin{enumerate}
\item $X$ is of full column rank;
\item $a_{j} +q_{j}/2 >0$ for $j = 1,...,r$.\end{enumerate}
\end{theorem}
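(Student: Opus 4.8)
The plan is to prove both conditions by contraposition: I assume nothing about propriety and instead show directly that if either condition fails, the nonnegative integrand in \eqref{eq:cy} already integrates to $+\infty$ over a suitable sub-region, so that $c(y)=\infty$. The two tools I rely on throughout are Tonelli's theorem (which legitimises lower-bounding $c(y)$ by shrinking the domain, since the integrand is nonnegative) and the observation that for a cdf $F$ with full support the per-observation factor $F(\eta_i)^{y_i}\{1-F(\eta_i)\}^{m_i-y_i}$, with $\eta_i=x_i^\top\beta+z_i^\top u$, is continuous and strictly positive in $(\beta,u)$, hence bounded below by a positive constant on any compact set.

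For condition~1, suppose $X$ is \emph{not} of full column rank, so there is $v\neq 0$ with $Xv=0$, i.e.\ $x_i^\top v=0$ for all $i$. Then $\eta_i$ is unchanged when $\beta$ is replaced by $\beta+tv$, so the whole integrand of \eqref{eq:cy} is constant along every line $\{\beta+tv:t\in\mathbb{R}\}$. Writing $\mathbb{R}^p=\mathrm{span}(v)\oplus v^{\perp}$ and integrating first in the $\mathrm{span}(v)$ direction, a strictly positive function that is constant in $t$ integrates to $+\infty$; as the remaining integrals over $v^{\perp}$, $u$, and $\tau$ of a nonnegative quantity are positive, Tonelli yields $c(y)=\infty$. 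Hence full column rank of $X$ is necessary.

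For condition~2, fix any $j$ and shrink the domain to $\|\beta\|\le 1$ and $\prod_{\ell=1}^{r}\{\|u_\ell\|\le\rho\}$ with $\rho=r^{-1/2}$ (which forces $\|u\|\le 1$), on which the likelihood is bounded below by some $\kappa>0$. Using $\phi_q(u;0,D(\tau)^{-1})=\prod_{\ell}(\tau_\ell/2\pi)^{q_\ell/2}\exp(-\tau_\ell\|u_\ell\|^2/2)$ together with the product prior, the resulting lower bound for $c(y)$ factorises over $\ell$, with $\ell$-th factor
\[
\int_0^\infty\Big(\int_{\|u_\ell\|\le\rho}(\tau_\ell/2\pi)^{q_\ell/2}e^{-\tau_\ell\|u_\ell\|^2/2}\,du_\ell\Big)\tau_\ell^{a_\ell-1}e^{-\tau_\ell b_\ell}\,d\tau_\ell.
\]
Restricting to $\tau_\ell\in(0,1]$, where $e^{-\tau_\ell\|u_\ell\|^2/2}\ge e^{-\rho^2/2}$ and $e^{-\tau_\ell b_\ell}\ge e^{-b_\ell}$, the inner Gaussian integral is at least a constant multiple of $\tau_\ell^{q_\ell/2}$, so the $j$-th factor dominates a positive multiple of $\int_0^1\tau_j^{a_j+q_j/2-1}\,d\tau_j$. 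This integral diverges precisely when $a_j+q_j/2\le 0$, while every other ($\ell\neq j$) factor is strictly positive; hence $a_j+q_j/2\le 0$ forces $c(y)=\infty$, and $a_j+q_j/2>0$ is necessary for each $j$.

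The only genuinely delicate step is the $\tau_j\downarrow 0$ estimate in condition~2: I must ensure that the $\tau_j^{q_j/2}$ coming from the Gaussian normalising constant is not wiped out by the Gaussian mass on $\|u_\ell\|\le\rho$ shrinking as $\tau_j$ grows, which is exactly why I truncate to $\tau_\ell\le 1$ so the exponential stays uniformly bounded below. The block-diagonal form of $D(\tau)^{-1}$ and the product structure of the prior are what let me factorise across the $r$ blocks and reduce everything to a one-dimensional power integral; note that $b_j\ge 0$ enters only through the trivial bound $e^{-\tau_j b_j}\ge e^{-b_j}$ on $(0,1]$, so this necessary condition is insensitive to $b_j$, as it should be since it governs only the behaviour of the prior near the origin.
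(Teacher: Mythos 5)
Your proof is correct, and its overall strategy---showing directly that the nonnegative integrand of \eqref{eq:cy} integrates to $+\infty$ when either condition fails---is the same as the paper's. For condition 2 the two arguments are essentially identical: both hinge on the divergence of $\int_0^1 \tau_j^{a_j+q_j/2-1}\,d\tau_j$ when $a_j+q_j/2\le 0$, the only difference being bookkeeping. The paper integrates over $\tau_j$ first, for each fixed $u_j$, using $e^{-\tau_j(b_j+u_j^\top u_j/2)}\ge e^{-(b_j+u_j^\top u_j/2)}$ on $(0,1]$, whereas you truncate $(\beta,u)$ to a compact set, bound the likelihood below by a constant $\kappa$, and factorize across the $r$ blocks before doing the one-dimensional power integral. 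For condition 1 your route is genuinely cleaner: the paper restricts $\beta$ to the slab $D_{\epsilon'}=\{\tilde\beta\in\mathbb{R}^p: |x_i^\top\tilde\beta|<\epsilon',\, i=1,\dots,n\}$, uses monotonicity of $F$ to bound the likelihood below on that slab, and then shows $D_{\epsilon'}$ has infinite Lebesgue measure via the spectral decomposition of $X^\top X$ (a zero eigenvalue leaves one coordinate unconstrained); you instead exploit exact translation invariance of the integrand along a null direction $v$ of $X$ and integrate out that free coordinate, which dispenses with both the spectral decomposition and any appeal to monotonicity of $F$. The one caveat, shared by both proofs, is the implicit requirement that the binomial likelihood factor be strictly positive on the relevant region (e.g., $0<F<1$ on $\mathbb{R}$, as for the logit and probit links): the paper asserts its multiplying integral is ``nonnegative and is not zero (a.e.)'' without justification, while you state full support of $F$ explicitly, so this is an assumption made visible rather than a gap relative to the paper.
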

If $Z$ is of full column rank, we have another necessary condition for posterior propriety in the case of binomial data when $\pi(\beta) \propto 1 $ and the prior for $\tau_{j}$ is as in \eqref{eq:tauprior}.
\begin{theorem}
  \label{theo_nece2}
  Suppose $Z$ has full column rank. The following are necessary conditions for posterior propriety, that is, for \eqref{eq:cy} to hold: 
\begin{enumerate}
\item $X$ is of full column rank;
\item There exists a positive vector $e>0$ such that $e^{\top} X^*_\triangle = 0$;
\item $a_{j} +q_{j}/2 >0$ for $j = 1,...,r$.
\end{enumerate}
\end{theorem}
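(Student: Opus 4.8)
The plan is to notice that conditions 1 and 3 require no new argument: they are verbatim the two necessary conditions of Theorem~\ref{theo_nece1}, which already hold for binomial data with no hypothesis on $Z$. Consequently the only genuinely new content, to be established under the standing assumption $\rank(Z)=q$, is condition 2 — the existence of a strictly positive $e$ with $e^\top X^*_\triangle = 0$. I would prove this by contraposition: assuming no such positive $e$ exists, I would show that the normalizing constant $c(y)$ in \eqref{eq:cy} is infinite, so the posterior is improper.

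The first step converts the linear-algebraic hypothesis into an exploitable direction via a theorem of the alternative (Stiemke's lemma). The nonexistence of $e>0$ solving $e^\top X^*_\triangle = 0$ is equivalent to the existence of $v \in \mathbb{R}^p$ with $X^*_\triangle v \geq 0$ componentwise and $X^*_\triangle v \neq 0$. Reading this row by row from the definition of $X^*_\triangle$ (with $t_i=1$ on $I_1\cup I_3$, $t_i=-1$ on $I_2$, and appended rows $-x_i^\top$ for $i\in I_3$), the constraints become $x_i^\top v \geq 0$ for $i\in I_1$, $x_i^\top v \leq 0$ for $i\in I_2$, and $x_i^\top v = 0$ for $i\in I_3$, with at least one strict inequality; since the $I_3$ rows force equalities, the strict inequality must sit on $I_1\cup I_2$. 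Setting $d=-v$ then yields a fixed-effects direction along which $x_i^\top d\le 0$ on $I_1$, $x_i^\top d\ge 0$ on $I_2$, $x_i^\top d=0$ on $I_3$.

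The second step is the analytic divergence along this ray. Fixing $u$ and $\tau$ and writing $\beta=\beta_0+sd$, $s\ge 0$, the fact that $F$ is a cdf (nondecreasing) makes $\eta_i = x_i^\top\beta + z_i^\top u$ move downward on $I_1$ and upward on $I_2$, so each factor $[1-F(\eta_i)]^{m_i}$ (for $y_i=0$) and $[F(\eta_i)]^{m_i}$ (for $y_i=m_i$) is nondecreasing in $s$, while the $I_3$ factors are constant. Hence the product is nondecreasing in $s$ and bounded below for all $s\ge 0$ by its value at $s=0$, which is strictly positive once the base point is chosen so that $0<F(\eta_i^0)<1$ for every $i$ (automatic for the logistic and probit links). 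Applying Tonelli to integrate $u$ first and $\beta$ last in \eqref{eq:cy} and decomposing $\beta$ along $d$ and its orthogonal complement, the inner line integral $\int_{\mathbb{R}}(\cdots)\,ds$ diverges for every $u$; therefore $\int_{\mathbb{R}^p} L(\beta,\tau\mid y)\,d\beta=\infty$, and since $\pi(\tau)>0$ this forces $c(y)=\infty$, the desired contradiction.

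The step I would scrutinize most is the precise role of $\rank(Z)=q$. The divergence argument above fixes $u$ and exploits only the flat prior on $\beta$, so it does not obviously require anything of $Z$; this suggests the hypothesis is used either to dovetail with the companion sufficiency statement (Theorem~\ref{theosuf}) or to preclude a degenerate null direction of $Z$ that, together with the random-effects prior, might otherwise keep the $\beta$-marginal finite along $d$. I would therefore first verify the Tonelli interchange and the uniform-in-$u$ positive lower bound along the separating ray — the genuine analytic content — and only then pin down exactly where, if anywhere, full column rank of $Z$ is actually invoked, rather than assuming it is essential. Conditions 1 and 3 require no further work, being precisely the conclusions of Theorem~\ref{theo_nece1}, which completes the proof.
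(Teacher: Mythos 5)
Your proposal is correct and follows essentially the same route as the paper: conditions 1 and 3 are inherited from Theorem~\ref{theo_nece1}, and condition 2 is handled by contraposition, using a theorem of the alternative (the paper invokes Rockafellar's Corollary 11.7.3 where you invoke Stiemke's lemma) to extract a nonzero direction $h$ with $x_i^\top h \le 0$ on $I_1$, $x_i^\top h \ge 0$ on $I_2$, $x_i^\top h = 0$ on $I_3$, along which every binomial likelihood factor is nondecreasing, so the $\beta$-integral (and hence $c(y)$) diverges. Your scrutiny of the role of $\rank(Z)=q$ matches what one finds in the paper's own proof: the full-column-rank hypothesis enters only in passing when $u$ is restricted to the bounded set $B_2=\{u:\; -a\le z_i^\top u\le a,\ i=1,\dots,n\}$, and, exactly as you suspect, the divergence argument itself only needs positivity (not finiteness) of the resulting $(u,\tau)$-integral, so the hypothesis is not essential to the core analytic step.
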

\begin{remark}
\cite{sun2001propriety} study necessary conditions for posterior propriety for linear mixed models. In their Theorem $2$, the necessary conditions on the hyperparameters of the prior of the variance for the random effects include the condition $2$ of Theorem~\ref{theo_nece1} (condition $3$ of Theorem~\ref{theo_nece2}). 
In addition, we prove that $X$ full column rank is a necessary condition while \cite{sun2001propriety} derive their necessary conditions under the assumption that $X$ has full column rank. \pcite{chen2002necessary} Theorem $3.2$ also provides necessary conditions for posterior propriety for GLMMs. Their conditions are on the hyperparameter of the priors based on the spectral decomposition for the covariance matrix of the random effects. 
\end{remark}

\subsubsection{Necessary conditions for posterior propriety for binary data}   
\label{subs:nec_binary}
The necessary conditions for posterior propriety for binary data follow from Theorems \ref{theo_nece1} and \ref{theo_nece2}. 
\begin{corollary}
For binary data, the following are necessary conditions for posterior propriety, that is, \eqref{eq:cy}: 
\begin{enumerate}
\item $X$ is of full column rank;
\item $a_{j} +q_{j}/2 >0$ for $j = 1,...,r$.
\end{enumerate}
\end{corollary}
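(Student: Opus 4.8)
The plan is to recognize that binary data is exactly the special case of binomial data with $m_i = 1$ for all $i = 1,\dots,n$. The corollary imposes the same prior structure as Theorem~\ref{theo_nece1} (namely $\pi(\beta)\propto 1$ and $\pi(\tau_j)\propto \tau_j^{a_j-1}e^{-\tau_j b_j}$ with $b_j \ge 0$), and the propriety integral in \eqref{eq:cy} is the identical object evaluated at $m_i = 1$. Hence Theorem~\ref{theo_nece1} applies verbatim to binary responses, and I would simply invoke it.

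The key observation is that the two conditions asserted by Theorem~\ref{theo_nece1} --- full column rank of $X$, and $a_j + q_j/2 > 0$ for every $j$ --- are stated purely in terms of $X$, the hyperparameters $a_j$, and the block sizes $q_j$. None of them references $m_i$ or any binomial-specific construction such as the partition $I_1, I_2, I_3$ or the matrix $X^*_\triangle$. Therefore both conditions transfer immediately to the binary case, which is precisely the content of the corollary. For consistency with the notation of Section~\ref{subs:binomial} I would note that, under $m_i = 1$, the set $I_3 = \{i : 1 \le y_i \le m_i - 1\}$ is empty, so $k = 0$ and every index falls into $I_1$ or $I_2$ according to whether $y_i = 0$ or $y_i = 1$; this is the same collapse already used to pass from Theorem~\ref{theosuf} to Corollary~\ref{corsuf}.

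There is essentially no obstacle here, since the result is a direct specialization rather than a fresh argument. The only care required is bookkeeping: confirming that the binary prior setup and the integral \eqref{eq:cy} coincide with those of Theorem~\ref{theo_nece1} at $m_i = 1$, which they do by construction. I would also remark that Theorem~\ref{theo_nece2} becomes relevant for binary data when $Z$ has full column rank, yielding the additional necessary condition that $e^\top X^*_\triangle = 0$ for some positive $e$; but that extra condition is not among the two claimed in the corollary, so it plays no role in the present proof.
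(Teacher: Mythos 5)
Your proposal is correct and matches the paper's own treatment: the paper states that the necessary conditions for binary data follow from Theorem~\ref{theo_nece1}, i.e., by direct specialization of the binomial result to $m_i = 1$, exactly as you argue. Your additional bookkeeping (that $I_3$ is empty, $k=0$, and the conditions involve only $X$, $a_j$, $q_j$, never $m_i$) is sound and consistent with how the paper handles the analogous collapse from Theorem~\ref{theosuf} to Corollary~\ref{corsuf}.
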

\begin{corollary} 
If $Z$ has full column rank, for binary data, the following are necessary conditions for posterior propriety, that is, \eqref{eq:cy}: 
\begin{enumerate}
\item $X$ is of full column rank;
\item There exists a positive vector $e>0$ such that $e^{\top} X^{*} = 0$;
\item $a_{j} +q_{j}/2 >0$ for $j = 1,...,r$.
\end{enumerate}
\end{corollary}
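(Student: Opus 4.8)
The plan is to obtain this Corollary as a direct specialization of Theorem~\ref{theo_nece2} to the binary case, so that no new analytic work is required; the entire task is to verify that the notation of Theorem~\ref{theo_nece2} collapses to the statement above when every $m_i=1$.

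First I would note that for binary responses $m_i=1$ for all $i$, so the middle index set $I_3=\{i\in N:1\le y_i\le m_i-1\}$ is empty and hence $k=|I_3|=0$. Consequently the block $\breve{X}$ has no rows, the augmented design matrix reduces to $X_\triangle=X$, and the extra indices $n+1,\dots,n+k$ appearing in the definition of $X^*_\triangle$ vanish. Thus $X^*_\triangle$ is simply the $n\times p$ matrix whose $i$th row is $t_i x_i^\top$.

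Next I would reconcile the two sign conventions. In the binary case $I_1=\{i:y_i=0\}$ and $I_2=\{i:y_i=1\}$, so the rule $t_i=1$ for $i\in I_1\cup I_3=I_1$ and $t_i=-1$ for $i\in I_2$ yields $t_i=1$ when $y_i=0$ and $t_i=-1$ when $y_i=1$, which is exactly $t_i=1-2y_i=c_i$. Therefore $X^*_\triangle$ agrees row-by-row with the binary matrix $X^*$ defined just before the Corollary. With these identifications, the three conditions of Theorem~\ref{theo_nece2}---full column rank of $X$, the existence of a positive $e$ with $e^\top X^*_\triangle=0$, and $a_j+q_j/2>0$---become verbatim the three conditions of the Corollary, and since the hypothesis that $Z$ has full column rank is common to both, invoking the theorem finishes the proof. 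I anticipate no genuine obstacle here; the only point demanding care is the bookkeeping that $k=0$ deletes the appended rows, so that $X^*_\triangle$ is genuinely $n\times p$ and coincides with $X^*$.
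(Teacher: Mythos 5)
Your proposal is correct and is exactly the paper's argument: the paper proves this corollary in one line by observing that for binary responses $X^*_\triangle$ reduces to $X^*$ (since $I_3=\emptyset$, hence $k=0$) and then invoking Theorem~\ref{theo_nece2}. Your write-up simply spells out the same notational collapse, including the verification that $t_i=1-2y_i=c_i$, in more detail.
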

\begin{proof}
Since $X^*_\triangle$ becomes $X^{*}$ for binary responses, the proof can be derived from that for Theorem \ref{theo_nece2}.
\end{proof}
\begin{remark}
The necessary conditions for posterior propriety for Generalized linear models (GLMs) for binary responses have been explored in \cite{chen2001propriety}. The common necessary condition between their Theorem $2.2$ and our result is that $X$ is of full column rank. Further, under $Z$ full column rank, we derive the same necessary condition as their Theorem $2.2$. Note that their necessary conditions have been derived under an assumption on the link function. Furthermore, we establish another necessary condition on $a_{j}$ and $q_{j}$, which does not arise for GLMs. 
\end{remark}

Note that the full column rank of $X$ is a common necessary condition in all
results of this Section so far. Section~\ref{sub:necessary_general}
shows that the condition is required for propriety for all exponential
family GLMMs.

\subsection{Necessary conditions for posterior propriety for exponential family GLMMs}
\label{sub:necessary_general}
Suppose that the $i$th response variable $Y_{i}$ has the density
\begin{align}
\label{eq:exp}
p(y_{i} \mid \theta_{i}, \rho) = \exp[\rho (y_{i} \theta_{i} - b(\theta_{i})) + d(y_{i},\rho)],\quad i=1,\dots,n, 
\end{align}
where $\rho$ is the scalar dispersion parameter, $\theta_{i}$ is the
canonical parameter and the functions $b(\theta_i)$, $d(y_i, \rho)$
are known. Conditional on $\theta_{i}$ and $\rho$, assume that
$Y_{i}$'s are independent. Binomial, Poisson, and several other
popular distributions can be represented in the form of the
exponential family \eqref{eq:exp}
\citep{mccullagh2019generalized}. Here, we assume that $\rho$ is known
and is one, which is the case for binomial and Poisson families. In GLMMs, as discussed before, the canonical
parameter $\theta_{i}$ is related to $X$, $Z$, $\beta$ and $u$ by
$\theta_{i} = \theta(\eta_{i})$, where $\theta$ is a monotonic
differentiable function, referred to as the $\theta$-link. Note that,
$\E(Y_{i}) = b'(\theta_{i})$. When $\theta_{i} = \eta_{i}$, the link
is said to be canonical. Assuming $u \sim N(0, \Psi)$ where $\Psi$ is
a positive definite matrix, the likelihood function of $(\beta, \Psi)$
is
\begin{align}
\label{eq:general_expo_likef}
L(\beta, \Psi \mid y)= \int_{\mathbb{R}^{q}}\bigg[\prod_{i=1}^{n} \exp[ (y_{i} \theta_{i} - b(\theta_{i})) + d(y_{i})]\bigg] \phi_{q}(u;0,\Psi) du,
\end{align}
where $d(y_{i}) = d(y_{i},\rho)$. As before, we
consider $\pi(\beta) \propto 1 $ and the
prior on $\Psi$ is denoted by $\pi(\Psi)$. We assume $\beta$ and
$\Psi$ are apriori independent. The corresponding posterior density of
$(\beta, \Psi)$ is $\pi(\beta,\Psi \mid y) \propto L(\beta,\Psi \mid y)\pi(\Psi)$.
Now, the joint posterior density of $(\beta, \Psi)$ is proper, if the marginal density of
$y$
\begin{align}
\label{eq:cy_expo}
c(y) = \int_{\mathfrak{A}} \int_{\mathbb{R}^{p}}
\int_{\mathbb{R}^{q}}\bigg[\prod_{i=1}^{n} \exp[ (y_{i} \theta_{i} - b(\theta_{i})) + d(y_{i})]\bigg] \phi_{q}(u;0,\Psi) du \pi(\Psi)
d\beta d\Psi < \infty,
\end{align}
where $\mathfrak{A}$ is an appropriate space of positive definite matrices where $\Psi$ lies.

We have the following necessary conditions for the propriety of $\pi(\beta,\Psi \mid y)$. 
\begin{theorem}
\label{theonec_expo}
Assume $\pi(\Psi)$ is a proper prior for $\Psi$ and $b(\theta)$ in \eqref{eq:exp} is a monotone function, then rank $(X) =p$ is a necessary condition for the propriety of the posterior density of $(\beta,\Psi)$, that is \eqref{eq:cy_expo}, in exponential family GLMMs. 
\end{theorem}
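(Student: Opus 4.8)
The plan is to prove the contrapositive: if $\rank(X) < p$, then $c(y) = \infty$ in \eqref{eq:cy_expo}, so the posterior of $(\beta,\Psi)$ is improper. The structural fact I would exploit is that the likelihood integrand depends on $\beta$ only through the linear predictors $\eta_i = x_i^\top\beta + z_i^\top u$, hence is invariant along any direction in $\ker(X)$, whereas the flat prior $\pi(\beta)\propto 1$ supplies no compensating decay in that direction. The rank deficiency therefore produces an unbounded, constant-integrand slab that forces the $\beta$-integral to diverge.

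First I would fix a nonzero $v\in\mathbb{R}^p$ with $Xv=0$, which exists precisely because $\rank(X)<p$. For every $t\in\mathbb{R}$ and every $u$, the shift $\beta\mapsto\beta+tv$ leaves each linear predictor unchanged, since $x_i^\top(\beta+tv)=x_i^\top\beta+t\,(Xv)_i=x_i^\top\beta$. Consequently each canonical parameter $\theta_i=\theta(\eta_i)$ is unchanged, and so is the entire product $\prod_{i=1}^n\exp[(y_i\theta_i-b(\theta_i))+d(y_i)]$. Note that this uses nothing about the possibly nonlinear $\theta$-link beyond the fact that $\theta_i$ is a function of $\eta_i$ alone.

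Since the full integrand in \eqref{eq:cy_expo} is nonnegative, I would invoke Tonelli's theorem to integrate over $\beta$ first. Decomposing $\beta=w+t\,v/\|v\|$ with $w$ in the orthogonal complement $v^\perp$, the inner integral $\int_{\mathbb{R}}[\cdots]\,dt$ equals $\big(\int_{\mathbb{R}}dt\big)$ times the $t$-independent integrand, and is therefore $+\infty$ at every $w$ for which that integrand is strictly positive. This is where the hypothesis that $b$ is monotone enters: it guarantees that the natural parameter set $\{\theta:b(\theta)<\infty\}$ is an interval with nonempty interior, so that (with the $\theta$-link mapping into this interior) the factor $\exp[\sum_i(y_i\theta_i-b(\theta_i))]$ is strictly positive on a subset of $v^\perp$ of positive Lebesgue measure. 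Hence $\int_{\mathbb{R}^p}[\cdots]\,d\beta=\infty$ for every fixed $(u,\Psi)$.

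Finally I would integrate the resulting identically-infinite $\beta$-integral against the nonnegative measure $du\,\pi(\Psi)\,d\Psi$. Because $\pi(\Psi)$ is proper, it has positive total mass, so $\int_{\mathfrak{A}}\int_{\mathbb{R}^q}(+\infty)\,du\,\pi(\Psi)\,d\Psi=\infty$, giving $c(y)=\infty$; properness is precisely what makes the conclusion meaningful, since it shows the rank condition is unavoidable even under the most favorable (proper) choice of prior on the variance components. I expect the main obstacle to be the positivity bookkeeping in the Tonelli step, namely making rigorous that the integrand is strictly positive on a positive-measure slice of $v^\perp$; this is exactly the point where monotonicity of $b$ (ensuring $b$ is finite on an interval into which the link maps) is needed, while the translation invariance and the blow-up of the one-dimensional integral along $\ker(X)$ are then immediate.
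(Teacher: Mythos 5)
Your proposal is correct, but it proves the theorem by a genuinely different mechanism than the paper. The paper also begins with a nonzero $\beta^*$ satisfying $X\beta^*=0$, but instead of exploiting invariance it restricts the $\beta$-integral to the slab $D_{\epsilon'}=\{\tilde{\beta}\in\mathbb{R}^p:\abs{x_i^\top\tilde{\beta}}<\epsilon',\ i=1,\dots,n\}$, uses monotonicity of $b$ and of the $\theta$-link to replace the integrand on that slab by the $\beta$-free lower bound $\prod_{i=1}^n\exp[\,y_i\theta(\pm\epsilon'+z_i^\top u)-b(\theta(\pm\epsilon'+z_i^\top u))\,]$ (with signs chosen according to the directions of monotonicity and the signs of the $y_i$), and then shows $\int_{D_{\epsilon'}}d\beta=\infty$ via the spectral-decomposition change of variables recycled from the proof of Theorem~\ref{theo_nece1}; a strictly positive $\E_u[\cdot]$ factor times this infinite volume gives $c(y)=\infty$. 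You instead use the \emph{exact} translation invariance of the integrand along $\ker(X)$, Tonelli, and the orthogonal decomposition $\beta=w+tv/\norm{v}$, so the inner $t$-integral diverges pointwise and no lower-bounding on a slab is needed. What your route buys is economy and generality: it makes clear that the monotonicity of $b$ is dispensable, since all that is required is strict positivity of the integrand, i.e.\ that $b(\theta(\eta))<\infty$ for every real $\eta$, which is already a standing assumption for the likelihood \eqref{eq:general_expo_likef} to be well defined; what the paper's route buys is uniformity with its proofs of Theorems~\ref{theo_nece1} and~\ref{theo_nece2}, which use the same slab-plus-monotone-bound template. One caveat on your write-up: your stated reason for invoking monotonicity is not right — the set $\{\theta:b(\theta)<\infty\}$ is an interval for \emph{any} exponential family (the cumulant function's domain is convex), not because $b$ is monotone — so in your argument the hypothesis does no work at all; the positivity you need comes from the well-definedness of the model, which you effectively assume when you require the $\theta$-link to map into the interior of that set.
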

\begin{remark}
The function $b(\theta)$ is monotone for binomial, Poisson, gamma, and inverse Gaussian families. On the other hand, this condition does not hold for the normal distribution. 
\end{remark}
\begin{remark}
  The necessary condition of rank $(X) = p$ 
  is proved for GLMs for binary data in \pcite{chen2001propriety} Theorem $2.2$. In \pcite{sun2001propriety} Theorem $2$, the assumption of $X$ full column rank is included for establishing the necessary conditions for posterior propriety for linear mixed models.  
\end{remark}

\section{An approximate Jeffreys' Prior for GLMMs with canonical link}
\label{sec:jp}
Jeffreys' prior \citep{jeffreys1946invariant} is a reference prior
widely used for objective Bayesian inference. In this section, we
construct an approximate `independence Jeffreys' prior'
\citep{berger2001objective} for GLMMs. For ease of presentation, we
assume that $\Psi = D(\tau)^{-1} = \oplus_{j=1}^{r} (1/\tau_{j}) \I_{q_{j}}$, as in
Section~\ref{sec:suffi}. Recall that the likelihood function for $(\beta,\tau)$,
$L(\beta,\tau \mid y)$ follows from \eqref{eq:general_expo_likef}.  Define
the complete likelihood function
\begin{align}
\label{eq:L_nointegral}
  L(\beta,\tau \mid y, u) &= \prod_{i=1}^{n} p(y_{i} \mid \beta,u) \phi_{q}(u;0,D(\tau)^{-1}) \nonumber\\
  & =  \prod_{i=1}^{n} p(y_{i} \mid \beta,u) (2\pi)^{-q/2} \prod_{j=1}^{r} \tau_{j}^{q_{j}/2} \exp(-\tau_{j} u_{j}^{\top} u_{j}/2).
\end{align}
From \cite{casella2001empirical} we have 
 \begin{align}
 \label{eq:second_deri_l}
 \frac{d^2 }{d \tau_{j}^2}\log L(\beta, \tau \mid y)= \E \Big[\frac{d^2}{d \tau_{j}^2} \log L(\beta, \tau \mid y, u) \Big| \tau_{j}, y \Big]+ \Var \Big[\frac{d}{d \tau_{j}} \log L(\beta, \tau \mid y, u) \Big| \tau_{j}, y \Big], 
 \end{align}
where the expectation and variance are with respect to $\pi(u_{j} \mid \tau_{j},y)$. Next, using \eqref{eq:L_nointegral}, we obtain
\begin{align}
 \label{eq:first_deri_lyu}
\frac{d}{d \tau_{j}} \log L(\beta,\tau \mid y, u)=q_{j} \tau_{j}^{-1}/2 - u_{j}^\top u_{j}/2,\;\mbox{and}\;  \frac{d^2}{d \tau_{j}^2} \log L(\beta,\tau \mid y, u)=-q_{j} \tau_{j}^{-2}/2.
\end{align}
Then, by using \eqref{eq:first_deri_lyu}, we get the expectation in \eqref{eq:second_deri_l} as
\begin{align}
\label{eq:expect_derived}
\E \Big[\frac{d^2}{d \tau_{j}^2} \log L(\beta, \tau \mid y, u) \Big| \tau_{j}, y \Big] =\E(-q_{j} \tau_{j}^{-2}/2 \mid \tau_{j}, y)=-q_{j} \tau_{j}^{-2}/2. 
\end{align}
Also, by using \eqref{eq:first_deri_lyu}, we have the variance in \eqref{eq:second_deri_l} as
\begin{align}
\label{eq:var_derived}
\Var \Big[\frac{d}{d \tau_{j}} \log L(\beta, \tau \mid y, u) \Big| \tau_{j}, y \Big] = \Var(u_{j}^\top u_{j} \mid \tau_{j}, y)/4 \approx \sum_{m=1}^{q_{j}} \Var (u_{jm}^2 \mid \tau_{j}, y)/4, 
\end{align}
Here, we have ignored the covariance terms. 

Let $y_{1.} = (y_{11},\dots, y_{1n_{1}}) $ denote the observations for
the level $1$ of the random effect $u_{j}$. The conditional distribution of
$u_{j1} \mid y_{1.}, \beta, \tau_{j}$ is
\begin{align}
f(u_{j1} \mid y_{1.},\beta, \tau_{j}) = \frac{\prod_{k=1}^{n_{1}} p(y_{1k} \mid \beta, u_{j1}) \phi(u_{j1}; 0, 1/\tau_{j})}{\int_{\mathbb{R}}\prod_{k=1}^{n_{1}} p(y_{1k} \mid \beta, u_{j1}) \phi(u_{j1}; 0, 1/\tau_{j}) d u_{j1}}, 
\end{align}
where $p(y_{1k} \mid \beta, u_{j1}) $ is the exponential density \eqref{eq:exp} with $\rho=1$, and $\phi(u_{j1}; 0, 1/\tau_{j})$ is the normal density for $u_{j1}$ with mean $0$ and variance $1/\tau_{j}$. 

\cite{booth1998standard} derive some approximations to the conditional mean and variance of the random effects. Following \cite{booth1998standard}, let 
\begin{align*}
l &= \log \prod_{k=1}^{n_{1}} p(y_{1k} \mid \beta, u_{j1}) \phi(u_{j1}; 0, 1/\tau_{j})\\& = \sum_{k=1}^{n_{1}}(y_{1k} \theta_{1k} - b(\theta_{1k}) + d(y_{1k}))-u_{j1}^2 \tau_{j}/2 - \log(2 \pi)/2 + \log \tau_{j}/2.
\end{align*}  
Since $\theta_{1k} = \eta_{1k}$, we have
\begin{align*}
l^{(1)}=\frac{d}{d u_{j1}} l =  \sum_{k=1}^{n_{1}} (y_{1k} - b'(\theta_{1k})) z_{j1k} - \tau_{j} u_{j1},
\end{align*}
and 
\begin{align*}
l^{(2)}=\frac{d^2}{d u_{j1}^2} l = - \tau_{j} - \sum_{k=1}^{n_{1}} z_{j1k}^2 \;t'(x_{1k}^\top \beta + z_{1k}^\top u).
\end{align*}
Here $z_{j1k}$ is the corresponding value from the $Z$ matrix. Note that $b'(\theta_{1k}) = \E (y_{1k} ) = t(x_{1k}^\top \beta + z_{1k}^\top u)$, where $t(\eta_{ik})$ is a function of $\eta_{ik}$, and $t'(x_{1k}^\top \beta + z_{1k}^\top u) =t'(\eta_{ik})$ is the derivative with respect to $\eta_{ik}$. Letting $l^{(1)}=0$, we get 
\[\tilde{u}_{j1}= \E(u_{j1} \mid \tau_{j},y_{1.})= \frac{\sum_{k=1}^{n_{1}} z_{j1k}(y_{1k} - \E(y_{1k}))}{\tau_{j}}.
\] Furthermore, as in \cite{booth1998standard}, we assume
$ \Var(u_{j1} \mid \tau_{j},y_{1.}) \approx - (l^{(2)})^{-1}$. Also,
in the above expression for the variance, we replace
$t'(x_{1k}^\top \beta + z_{1k}^\top u)$ with
$t'(x_{1k}^\top \hat{\beta})$, where $\hat{\beta}$ is the MLE of the
regression coefficients from the GLM without the random effects. Then,
it follows that
\[\tilde{v}_{j1}= \Var(u_{j1} \mid \tau_{j},y_{1.})\approx \Big(\tau_{j} + \sum_{k=1}^{n_{1}} z_{j1k}^2 \;t'(x_{1k}^\top \hat{\beta})\Big)^{-1}
\]

Next, we assume the conditional distribution of $u_{j1} \mid \tau_{j}, y_{1.}$ is the normal distribution with mean $\tilde{u}_{j1}$ and variance $\tilde{v}_{j1} $. Then we have $\Var(u_{j1}^2 \mid \tau_{j},y_{1.})=2 \tilde{v}_{j1}^2 + 4 \tilde{v}_{j1} \tilde{u}_{j1}^2.$ Repeating the above method for $u_{jm}, \, m=2,\dots,q_{j}$, \eqref{eq:var_derived} becomes
\begin{align}
\label{eq:var_derived_new}
\Var \Big[\frac{d}{d \tau_{j}} \log L(\beta, \tau \mid y, u) \Big| \tau_{j} \Big]  &\approx \sum_{m=1}^{q_{j}} [ \tilde{v}_{jm}^2/2 +  \tilde{v}_{jm} \tilde{u}_{jm}^2] \nonumber\\
&\approx \sum_{m=1}^{q_{j}}  \tilde{v}_{jm}^2/2 = \sum_{m=1}^{q_{j}} \big(\tau_{j} + \sum_{k=1}^{n_{m}} z_{jmk}^2 \;t'(x_{mk}^\top \hat{\beta})\big)^{-2}/2,
\end{align}
where the second approximation is using $\tilde{u}_{jm} \approx 0$.

Finally, by using \eqref{eq:expect_derived},
\eqref{eq:var_derived_new} in \eqref{eq:second_deri_l}, we obtain the
expected Fisher information for $\tau_{j}$ as
$q_{j} \tau_{j}^{-2}/2 -\sum_{m=1}^{q_{j}} \big(\tau_{j} +
\sum_{k=1}^{n_{m}} z_{jmk}^2 \;t'(x_{mk}^\top
\hat{\beta})\big)^{-2}/2.$ Thus, the approximate Jeffreys' prior for $\tau_i$ is
\begin{equation}
  \label{eq:jeff}
\pi_J(\tau_{i}) = \big[q_{i} \tau_{i}^{-2}/2 -\sum_{m=1}^{q_{i}} \big(\tau_{i} + \sum_{k=1}^{n_{m}} z_{imk}^2 \;t'(x_{mk}^\top \hat{\beta})\big)^{-2}/2 \big]^{1/2}, \;\text{for}\; i=1,\dots,r.   
\end{equation}
Finally, the `independence Jeffreys' prior
\citep{berger2001objective,kass1996selection} for $(\beta, \tau)$ is
$\pi_J(\beta, \tau) = \prod_{i=1}^r \pi_J(\tau_{i})$.

\subsection{Posterior Propriety for binomial and Poisson GLMMs}
For binomial and Poisson families, $t'(x_{mk}^\top \hat{\beta})$ in
\eqref{eq:jeff} is positive. Denoting
$\sum_{k=1}^{n_{m}} z_{imk}^2 \;t'(x_{mk}^\top \hat{\beta})$ by $c_{im} \ge 0$,
from \eqref{eq:jeff}, we have 
\begin{align}
  \label{eq:jeffprop}
  \pi_J(\tau_{i}) = \big[q_{i} \tau_{i}^{-2}/2 -\sum_{m=1}^{q_{i}} (\tau_{i} + c_{im})^{-2}/2 \big]^{1/2}
   = \bigg[\sum_{m=1}^{q_i} \Big\{\frac{1}{2\tau_i^{2}} -  \frac{1}{2(\tau_i +c_{im})^{2}}\Big\} \bigg]^{1/2}.
\end{align}
Now,
\begin{align}
  \label{upper}
  \frac{1}{2\tau_i^{2}} -  \frac{1}{2(\tau_i +c_{im})^{2}} &= \frac{c_{im}^2 + \tau_i c_{im} + \tau_i c_{im}}{2 \tau_i^2 (\tau_i + c_{im})^2}  =  \frac{c_{im} (\tau_i + c_{im}) + \tau_i c_{im}}{2 \tau_i^2 (\tau_i + c_{im})^2} \nonumber\\
&= \frac{c_{im}}{2\tau_i^2 (\tau_i+c_{im})} + \frac{c_{im}}{2\tau_i(\tau_i+c_{im})^2} \le  \frac{\sqrt{c_{im}}}{4 \tau_i^{5/2}} + \frac{\sqrt{c_{im}}}{4 \tau_i^{5/2}  } = \frac{\sqrt{c_{im}}}{2 \tau_i^{5/2}  },
\end{align}
where we use $\tau_i + c_{im} \ge 2 \sqrt{\tau_i c_{im}}$ and
$(\tau_i + c_{im})^2 \ge 2 \tau_i\sqrt{\tau_i c_{im}}$ for the
inequality. Thus, from \eqref{eq:jeffprop} and \eqref{upper}, we have
\begin{equation}
  \label{eq:jeffup}
  \pi_J(\tau_{i}) \le \Big(\sum_{m=1}^{q_i} \sqrt{c_{im}}/2\Big)^{1/2} \tau_i^{-5/4}.  
\end{equation}
Since the upper bound in \eqref{eq:jeffup} is a power prior with
$a_i = -1/4$, sufficient conditions for posterior propriety with the
proposed Jeffreys' prior $\pi_J (\beta, \tau)$ for binomial and
Poisson GLMMs follows from Theorem~\ref{theosuf_powerprior} and
Corollary~\ref{cor_suf_poisson_power}, respectively. Thus, if the
condition $1$ of Theorem~\ref{theosuf_powerprior} holds then the
posterior densities of binomial and
Poisson GLMMs with canonical links and the prior $\pi_J (\beta, \tau)$ are proper.
\begin{remark}
  \cite{natarajan2000reference} derive conditions for posterior
  propriety for GLMMs with an approximate Jeffreys' prior (see
  Section~\ref{sec:natajeff} for some details on this prior). In
  general, \pcite{natarajan2000reference} conditions require a complex,
  multi-dimensional integral to be finite. For Poisson GLMMs with the
  log link, their conditions require $y_{i} >0$ for $p$
  observations, the corresponding sub-matrix of $X$ is full rank, and
  for the binary GLMM, \cite{natarajan2000reference} assume some
  intractable integrals to be finite, along with other conditions on
  $(X, y)$, whereas the sufficient condition given here for posterior
  propriety with the proposed Jeffreys'
  prior 
  can be easily verified \citep{roy2007convergence}.
\end{remark}
\subsection{Comparison with \pcite{natarajan2000reference} prior}
\label{sec:natajeff}
\cite{natarajan2000reference} provide an approximate Jeffreys' prior
for $\Psi$ for GLMMs. The general form of \pcite{natarajan2000reference} prior is
complex involving sums of traces of inverse of certain matrices and
derivatives of some matrices, although closed-form expressions can be
derived for binary and Poisson one-way random intercept GLMMs. For example, if
$p= q=q_{1}=r=1$, \pcite{natarajan2000reference} prior for binary data is 
\begin{align*}
\pi_{NK}(\tau) \propto \Big(1+n e^{\hat{\beta} } \tau/(1+e^{\hat{\beta}} )^2 \Big)^{-1},
\end{align*}
and for Poisson data, is
\begin{align*}
\pi_{NK}(\tau) \propto \big(1+n e^{\hat{\beta} } \tau\big)^{-1},
\end{align*}
where $n$ is the number of observations.

Note that for binary data,
$d(1+e^{-\eta})^{-1}/d \eta = (1+e^{-\eta})^{-2} e^{-\eta}$, thus,
$t'(x_{mk}^\top \hat{\beta})= (1+e^{-\hat{\beta}})^{-2}
e^{-\hat{\beta}}$, and for the Poisson GLMMs,
$d e^{\eta}/d \eta = e^{\eta}$, thus,
$t'(x_{mk}^\top \hat{\beta})= e^{\hat{\beta}}$. Hence, from
\eqref{eq:jeffprop}, if
$p= q=q_{1}=r=1$, for binary responses we have
\begin{align*}
\pi_{J}(\tau) \propto \big[\tau^{-2}/2 - \big(\tau +ne^{\hat{\beta}}/(1+e^{\hat{\beta}})^2\big)^{-2}/2 \big]^{1/2}, 
\end{align*}
and for the Poisson GLMMs
\begin{align*}
\pi_{J}(\tau) \propto \big[\tau^{-2}/2 - \big(\tau +ne^{\hat{\beta}}\big)^{-2}/2 \big]^{1/2}. 
\end{align*}
In the Appendix, we compare the behavior of tails of $\pi_{J}(\tau)$
and $\pi_{NK}(\tau)$, and we observe that practically $\pi_J$ can be more
diffuse than $\pi_{NK}$.

\section{Examples}
\label{sec:application}
In this section, we use two popular examples, namely, the one-way
random effects model and the two-way random effects model to
demonstrate how our theoretical results can be easily applied to verify
posterior propriety for GLMMs.

\subsection{An example involving one-way random effects models}
\label{sec:oneway}
The data model is
$ Y_{i} \mid \beta,u,\tau \overset{ind}{\sim} Binomial(m_{i},
F(x_{i}^\top \beta + z_{i}^\top u))$ for $i = 1,\dots,n$ with
$u \mid \tau \sim N(0, [1/\tau] I).$ Here, we consider $p=2$ and $r=1$
random effect with $q = q_{1} = 2$. Also, we consider $n=6$, and
$y=(0,\,4,\,2,\,4,\,3,\,5)$ as the observed binomial responses. Let
$m_{1}=3,\, m_{2}=4,\,m_{3}=5,\,m_{4}=4,\,m_{5}=3$ and $m_{6}=5$.  The
design matrix $X$ and the random effect matrix $Z$ are given by
\[
X^\top=\begin{pmatrix}
1 &1&1&1&1&1\\
2.9&1.7&2.6&3.1&3.8&4.2
\end{pmatrix},\;
\text{and}\;
Z^\top=\begin{pmatrix}
1 &1&1&0&0&0\\
0&0&0&1&1&1
\end{pmatrix} 
\]
Then, based on the notations in Section \ref{subs:binomial}, we obtain
\[
X_{\triangle}^\top=\begin{pmatrix}
1&1&1&1&1&1&1\\
2.9&1.7&2.6&3.1&3.8&4.2&2.6
\end{pmatrix},\;
\text{and}\]
\[
{X^*_\triangle}^{\top}=\begin{pmatrix}
1&-1&1&-1&-1&-1&-1\\
2.9&-1.7&2.6&-3.1&-3.8&-4.2&-2.6
\end{pmatrix}
\]
Note that $X$ is of full column rank. To apply Theorem~\ref{theosuf}, we need
to check if there exists a positive vector $e>0$ such that
$e^{\top} X^*_\triangle = 0$. Using the function `simplex' in the R
package `boot', we easily find that the above condition is satisfied (see
\pcite{roy2007convergence} Appendix A for details). That is, the condition $1$ in
Theorem~\ref{theosuf} holds. We can choose the hyperparamters for the
prior of $\tau$ satisfying the condition $2$ in Theorem
\ref{theosuf}. If we consider the probit or the logit link, the
condition $3$ in Theorem \ref{theosuf} is satisfied. Therefore, the
resulting posterior densities of $(\beta, \tau)$ are proper. On the other hand,
since many observed $y_{i}$'s are zero as well as $m_{i}$, results in
\cite{chen2002necessary} cannot be used to establish posterior
propriety here. In particular, as $0< y_i <m_i$ only for $i=3$, there
does not exist a full column rank sub-matrix $X_s$ of $X$ with
$0< y_{s_i} <m_{s_i}, i=1,\dots,p$. Thus, \pcite{chen2002necessary}
results are not applicable.

We also consider the Poisson GLMM, where
$ Y_{i} \mid \beta,u,\tau \overset{ind}{\sim} Poisson(\exp[x_{i}^\top
\beta + z_{i}^\top u])$ for $i = 1,\dots,n$ with
$u\mid \tau \sim N(0, [1/\tau] I).$ We consider $p=2$ and $r=1$ random
effect with $q=q_{1}=2$. Also, we have $n=6$ and
$y=(0,\,0,\,0,\,2,\,0,\,0)$ as the observed Poisson responses. The
design matrix $X$ and the random effect matrix $Z$ are given by
\[
X^\top=\begin{pmatrix}
1 &1&1&1&1&1\\
9.4&8.7&10.2&9.1&8.9&9.5
\end{pmatrix},\;
\text{and}\;
Z^\top=\begin{pmatrix}
1 &1&1&0&0&0\\
0&0&0&1&1&1
\end{pmatrix} 
\]
Then, based on the notations in Section \ref{subs:binomial} and \ref{subs:poisson}, we obtain
\[
X_{\triangle}^\top=\begin{pmatrix}
1&1&1&1&1&1\\
9.4&8.7&10.2&9.1&8.9&9.5
\end{pmatrix},\;
\text{and}\]
\[
{X^*_\triangle}^{\top}=\begin{pmatrix}
1&1&1&-1&1&1\\
9.4&8.7&10.2&-9.1&8.9&9.5
\end{pmatrix}
\]
Here, $X$ has full column rank. Using similar methods as the last example, the conditions in Lemma~\ref{suf_poisson} hold. Hence, the propriety of the posterior of $(\beta, \tau)$ is obtained. Since many observed $y_{i}$'s are zero, existing results in \cite{chen2002necessary} and \cite{natarajan2000reference} cannot be used to establish posterior propriety for this example. In particular, as $0< y_i $ only for $i=4$, there
does not exist a full column rank sub-matrix $X_s$ of $X$ with
$0< y_{s_i} , i=1,\dots,p$. Thus, \pcite{chen2002necessary} and \pcite{natarajan2000reference}
results are not applicable.

\subsection{An example involving two-way random effects models}
Consider the binomial GLMM is
$ Y_{i} \mid \beta,u,\tau \overset{ind}{\sim} Binomial(m_{i},
F(x_{i}^\top \beta + z_{i}^\top u))$ for $i = 1,\dots,n$, with
$u =(u_1, u_2)^{\top}$ and
$ u_{j} \mid \tau_{j} \overset{ind}{\sim} N(0, [1/\tau_{j}]
\I_{q_{j}}), j = 1, 2.$ Suppose we have $r=2$ random effects,
$p=2,\,q_{1} = 3, \, q_{2} = 2$ and $q=5$. Also, we consider
$y=(0,\,1,\,2,\,0,\,2,\,2)$ as the observed binomial responses. Let
$m_{1}=m_{2}=m_{3}=m_{4}=m_{5}=m_{6}=2$. The design matrix $X$ and the
random effect matrix $Z$ are given by
\[
X^\top=\begin{pmatrix}
1&1&1&1&1&1\\
1.8&2.1&3.2&4.9&5.3&6.1
\end{pmatrix} \; \text{and}\;
Z^\top=\begin{pmatrix}
1&1 &0&0&0&0\\
0&0&1&1&0&0\\
0&0&0&0&1&1\\
1&0&1&0&1&0\\
0&1&0&1&0&1
\end{pmatrix}
\]
Then, based on the notations in Section \ref{subs:binomial}, we obtain
\[
X_{\triangle}^\top=\begin{pmatrix}
1&1&1&1&1&1&1\\
1.8&2.1&3.2&4.9&5.3&6.1&2.1
\end{pmatrix}, \;\text{and}
\] 
\[
{X^*_\triangle}^\top=\begin{pmatrix}
1&1&-1&1&-1&-1&-1\\
1.8&2.1&-3.2&4.9&-5.3&-6.1&-2.1
\end{pmatrix}
\] 
In this example, $X$ is a full column rank matrix while $Z$ is not. By
employing the same method as that in Section~\ref{sec:oneway}, we find
that there exists a positive vector $e>0$ such that
$e^{\top} X^*_\triangle = 0$. That is, the condition $1$ of
Theorem~\ref{theosuf} holds. We can choose the hyperparameters for the
priors of $\tau_{1}$ and $\tau_{2}$ satisfying the condition $2$ in
Theorem \ref{theosuf}. For probit or logistic GLMMs, the condition $3$
in Theorem \ref{theosuf} is satisfied. Hence, the resulting posterior densities of
$(\beta, \tau)$ are proper according to Theorem \ref{theosuf}. This
example demonstrates that $Z$ full column rank is not a necessary condition
for posterior propriety for GLMMs with the improper uniform prior on
$\beta$.  Also, since $0< y_i <m_i$ only for $i=2$, there does not
exist a full column rank sub-matrix $X_s$ of $X$ with
$0< y_{s_i} <m_{s_i}, i=1,\dots,p$. Thus, although
\pcite{chen2002necessary} results do not apply to this two-way random
effects example, their conditions on $(y, X)$ for binomial GLMMs do not hold.

One can analyze datasets arising in diverse disciplines  by fitting a
two-way random effects GLMM. For example, from the popular R package
`lme4' \citep{r:lme4}, consider the `grouseticks' dataset, which can
be analyzed by fitting a Poisson GLMM for the response
variable `ticks' with `year' and `height' as the
fixed effects, and `brood' and `location' as the random effects. Then,
from Lemma~\ref{suf_poisson}, propriety of the posterior of
$(\beta, \tau)$ follows when the improper uniform prior is used for
$\beta$ and proper gamma priors satisfying the condition $2$ in
Lemma~\ref{suf_poisson} are placed on $\tau$. Here,
$\beta =(\beta_0, \beta_1,\beta_2, \beta_3)$ with the intercept term
$\beta_0$, the fixed effects parameters $\beta_1, \beta_2$ of the
levels 96 and 97, respectively, of the variable year and the
regression coefficient $\beta_3$ of the continuous variable
height. Also, $\tau = (\tau_1, \tau_2)$ with $1/\tau_1 (1/\tau_2)$
being the variance of the random effect brood (location). 
Finally, since this example involves two random effects, existing
results in \cite{chen2002necessary} and \cite{natarajan2000reference},
who consider models for longitudinal data and cluster data, respectively, are not readily
applicable.

\section{Discussion}
\label{sec:disc}
We have derived the necessary and sufficient conditions for posterior
propriety for GLMMs under various widely used reference priors,
including the Jeffreys' prior. Unlike the available results in the
literature, the conditions presented here for binomial and Poisson
GLMMs can be easily verified. For example, our results do not assume
the strong exponentiated norm bound condition of \cite{mich:morr:2016}
or an intractable, multi-dimensional integral as in
\cite{natarajan2000reference} to be finite. Also, some existing
results on posterior propriety for GLMMs put some constraints on the
random effects design matrix $Z$ that may not hold in practice.  On
the other hand, some of our sufficient conditions for posterior
propriety do not put any constraint on $Z$, although they assume
proper priors on the variance components of the random effects. We
also provide sufficient conditions for posterior propriety when the
improper power prior or an approximate Jeffreys' prior is used on the
variance parameters of the random effects. Exploiting a relationship
between the likelihoods of the Poisson GLMMs with the log link and the
binomial GLMMs with the logit link, this article derives easily
verifiable conditions for posterior propriety for Poisson GLMMs with
the log link. In Section \ref{sec:suffi} for priors on the $r$ random
effects, we have considered the practical settings where either
$b_j = 0$ for all $j =1,\dots,r$ or $b_j > 0$ for all $j =1,\dots,r$,
although it might be of theoretical interest to study posterior
propriety in the situations where $b_j =0$ for some $j$ while it is
strictly positive for the other random effects. Finally, in many
modern datasets, $p$ is often larger than $n$. Thus, a potential
future work is to develop posterior propriety conditions for GLMMs in
the case of $p>n$.

\bigskip
\noindent {\Large \bf Appendix: Proofs of theoretical results}
\medskip
\begin{appendix}

\begin{proof}[Proof of Theorem~\ref{theosuf}]
  As in the proof of Theorem 1 of \cite{roy2013posterior}, we have
  \begin{align*}
    &    \binom{m_{i}}{y_{i}} \big{[}F(x_{i}^\top \beta + z_{i}^\top u)\big{]}^{y_{i}}\big{[}1- F(x_{i}^\top \beta + z_{i}^\top u) \big{]}^{m_{i}-y_{i}}\\
    & \le \left\{
\begin{array}{ll}
1- F(x_{i}^\top\beta + z_{i}^\top u) &\mbox{if} \,i \in I_{1}\\
F(x_{i}^\top\beta + z_{i}^\top u)  &\mbox{if} \,i \in I_{2}\\
{m_{i} \choose y_{i}} F(x_{i}^\top\beta + z_{i}^\top u) \big[1- F(x_{i}^\top\beta + z_{i}^\top u)\big]  &\mbox{if}\,i \in I_{3}.
\end{array}
\right.
  \end{align*}
Thus from \eqref{eq:cy} we have
\begin{align}
\label{eq:binosuf}
c(y) &\le \int_{\mathbb{R}_{+}^{r}} \int_{\mathbb{R}^{p}} \int_{\mathbb{R}^{q}} \bigg\{\prod_{i \in I_{1}} \big[1- F(x_{i}^\top\beta + z_{i}^\top u) \big] \bigg\}\bigg\{ \prod_{i \in I_{2}} F(x_{i}^\top\beta + z_{i}^\top u)\bigg\}  \nonumber\\
&\quad \bigg\{\prod_{i \in I_{3}} {m_{i} \choose y_{i}} F(x_{i}^\top\beta + z_{i}^\top u) \big[1- F(x_{i}^\top\beta + z_{i}^\top u) \big]\bigg\} \phi_{q}(u;0,D(\tau)^{-1}) du \pi(\tau)d\beta d\tau. 
\end{align}
If the random variable $\xi \sim F(\cdot)$, then
$1-F(x) = \E \,\I(\xi >x)$ and $F(x) = \E \, \I(-\xi \ge -x)$. Let
$\delta_{1}, \delta_{2}, \dots,\delta_{n+k} \stackrel{iid}{\sim} F(\cdot)$. Let
$\delta = (\delta_{1}, \delta_{2},\dots,\delta_{n+k})^\top$ and
$\delta^* = (t_{1}\delta_{1}, t_{2}\delta_{2},\dots,t_{n+k}
\delta_{n+k})^\top$, where $t_{i}$ is as defined before. Thus, using
\eqref{eq:tauprior}, the inequality \eqref{eq:binosuf} becomes
\begin{align}
c(y) &\le (2\pi)^{-\frac{q}{2}}\int_{\mathbb{R}_{+}^{r}} \int_{\mathbb{R}^{q}}  \bigg[\prod_{i \in I_{3}} {m_{i} \choose y_{i}}\bigg] \int_{\mathbb{R}^{p}} \E\,\big[\I\{t_{i} (x_{i}^\top\beta + z_{i}^\top u) \le t_{i} \delta_{i};1 \le i \le n+k \} \big]d\beta  \nonumber\\  &\quad\prod_{j=1}^{r} \tau_{j}^{a_{j}+q_{j}/2-1} \exp[-\tau_{j}(b_{j} + u_{j}^\top u_{j}/2)] d u d \tau  \label{eq:cy_binomial_ineq}\\
&= (2\pi)^{-\frac{q}{2}} \prod_{i \in I_{3}} {m_{i} \choose y_{i}} \int_{\mathbb{R}^{q}} \prod_{j=1}^{r} \frac{\Gamma{(a_{j}+q_{j}/2)}}{(b_{j}+ u_{j}^\top u_{j}/2)^{a_{j}+q_{j}/2}}\nonumber\\  &\quad \E \big[ \int_{\mathbb{R}^{p}} \I\{t_{i} (x_{i}^\top\beta + z_{i}^\top u) \le t_{i} \delta_{i};1 \le i \le n+k \} d\beta  \big] du \nonumber \\
&= (2\pi)^{-\frac{q}{2}} \prod_{i \in I_{3}} {m_{i} \choose y_{i}} \int_{\mathbb{R}^{q}} \prod_{j=1}^{r} \frac{\Gamma{(a_{j}+q_{j}/2)}}{(b_{j}+ u_{j}^\top u_{j}/2)^{a_{j}+q_{j}/2}}\nonumber\\  &\quad \E \big[\int_{\mathbb{R}^{p}} \I\{t_{i} x_{i}^\top\beta  \le t_{i} \delta_{i} -t_{i}z_{i}^\top u;1 \le i \le n+k \} d\beta  \big] du \nonumber\\
&= (2\pi)^{-\frac{q}{2}}\prod_{i \in I_{3}} {m_{i} \choose y_{i}} \int_{\mathbb{R}^{q}} \prod_{j=1}^{r} \frac{\Gamma{(a_{j}+q_{j}/2)}}{(b_{j}+ u_{j}^\top u_{j}/2)^{a_{j}+q_{j}/2}} \E \big[ \int_{\mathbb{R}^{p}} \I\{X^*_\triangle \beta  \le \delta^* -Z^*_{\triangle} u\} d\beta \big] du \nonumber\\
&\le (2\pi)^{-\frac{q}{2}} \prod_{i \in I_{3}} {m_{i} \choose y_{i}} \int_{\mathbb{R}^{q}} \prod_{j=1}^{r} \frac{\Gamma{(a_{j}+q_{j}/2)}}{(b_{j}+ u_{j}^\top u_{j}/2)^{a_{j}+q_{j}/2}}\nonumber\\  &\quad \E \big[\int_{\mathbb{R}^{p}} \I\big\{\norm{\beta}  \le l \norm{\delta^* -Z^*_{\triangle} u} \big\} d\beta \big]  du, \label{eq:cy_binomial}
\end{align} 
where the first equality follows from Tonelli's Theorem and the condition $2$ in Theorem~\ref{theosuf}. Note that $X$ is of full column rank by the condition $1$. Then $X_{\triangle}$ is also of full column rank. Also, there exists a positive vector $e>0$ such that $e^{\top} X^*_\triangle = 0$. Thus Lemma $4.1$ in \cite{chen2001propriety} can be used to get the last inequality, where $l$ is a constant depending on $X^*_\triangle$. Since $\norm{\delta^* -Z^*_{\triangle} u} \le \norm{\delta^*} + \norm{Z^*_{\triangle} u} $, from \eqref{eq:cy_binomial}, we have
\begin{align}
\label{eq:binosuff}
c(y) &\le (2\pi)^{-\frac{q}{2}}\prod_{i \in I_{3}} {m_{i} \choose y_{i}} \int_{\mathbb{R}^{q}} \prod_{j=1}^{r} \frac{\Gamma{(a_{j}+q_{j}/2)}}{(b_{j}+\frac{1}{2}u_{j}^\top u_{j})^{a_{j}+q_{j}/2}} \E \big[\int_{\mathbb{R}^{p}} \I\big\{\norm{\beta}  \le l \big(\norm{\delta^*} + \norm{Z^*_{\triangle} u}\big)\big\} d\beta \big] du \nonumber\\
&= (2\pi)^{-\frac{q}{2}}\prod_{i \in I_{3}} {m_{i} \choose y_{i}} \int_{\mathbb{R}^{q}} \prod_{j=1}^{r} \frac{\Gamma{(a_{j}+q_{j}/2)}}{(b_{j}+u_{j}^\top u_{j}/2)^{a_{j}+q_{j}/2}} \E  \big[2^p l^p \big(\norm{\delta^*} + \norm{Z^*_{\triangle} u}\big)^p  \big] du \nonumber\\
&\le (2\pi)^{-\frac{q}{2}} \prod_{i \in I_{3}} {m_{i} \choose y_{i}} \int_{\mathbb{R}^{q}} \prod_{j=1}^{r} \frac{\Gamma{(a_{j}+q_{j}/2)}}{(b_{j}+u_{j}^\top u_{j}/2)^{a_{j}+q_{j}/2}} \E  \big[ 2^{2p-1} l^p \big ( \norm{\delta^*}^p + \norm{Z^*_{\triangle} u}^p \big) \big] du \nonumber\\
&= (2\pi)^{-\frac{q}{2}} \prod_{i \in I_{3}} {m_{i} \choose y_{i}} 2^{2p-1} l^p \bigg[\int_{\mathbb{R}^{q}} \prod_{j=1}^{r} \frac{\Gamma{(a_{j}+q_{j}/2)}}{(b_{j}+u_{j}^\top u_{j}/2)^{a_{j}+q_{j}/2}} \E  \norm{\delta^*}^p du \nonumber\\&\hspace{1in} + \int_{\mathbb{R}^{q}} \prod_{j=1}^{r} \frac{\Gamma{(a_{j}+q_{j}/2)}}{(b_{j}+u_{j}^\top u_{j}/2)^{a_{j}+q_{j}/2}}   \norm{Z^*_{\triangle} u}^p  du\bigg] \nonumber \\
&\le (2\pi)^{-\frac{q}{2}} \prod_{i \in I_{3}} {m_{i} \choose y_{i}}   2^{2p-1} l^p  \bigg[ \E \norm{\delta}^p \int_{\mathbb{R}^{q}} \prod_{j=1}^{r} \frac{\Gamma{(a_{j}+q_{j}/2)}}{(b_{j}+u_{j}^\top u_{j}/2)^{a_{j}+q_{j}/2}} du  \nonumber\\&\hspace{1in} + \lambda^{p/2} \int_{\mathbb{R}^{q}} \prod_{j=1}^{r} \frac{\Gamma{(a_{j}+q_{j}/2)}}{(b_{j}+u_{j}^\top u_{j}/2)^{a_{j}+q_{j}/2}}  (\sum_{j=1}^{r} u_{j}^\top u_{j})^{\frac{p}{2}} du \bigg],
\end{align}
where the second inequality follows from the fact that $(a+b)^c \le 2^{c-1} (a^c +  b^c)$ for $c \ge 1$, $a \ge 0$ and $b \ge 0$. Since $\norm{Z^*_{\triangle} u}^p =(u^\top {Z^*_{\triangle}}^{\top} Z^*_{\triangle} u)^{p/2} \le \lambda^{p/2} \norm{u}^p =  \lambda^{p/2} (\sum_{j=1}^{r} u_{j}^\top u_{j})^{\frac{p}{2}}$, where $\lambda$ is the largest eigenvalue of ${Z^*_{\triangle}}^{\top} Z^*_{\triangle}$, the last inequality follows.

Next, we will work on the integration in the first term on the right-hand side of \eqref{eq:binosuff}. Because
\begin{align*}
&\int_{\mathbb{R}^{q}} \prod_{j=1}^{r} \frac{\Gamma{(a_{j}+q_{j}/2)}}{(b_{j}+u_{j}^\top u_{j}/2)^{a_{j}+q_{j}/2}} d u \le \max_{1 \le j \le r} \Gamma{(a_{j}+q_{j}/2)} \prod_{j=1}^{r}\int_{\mathbb{R}^{q_{j}}} \frac{1}{(b_{j}+u_{j}^\top u_{j}/2)^{a_{j}+q_{j}/2}} d u_{j},
\end{align*}
we focus on the following integration
\begin{align}
\label{eq:first_integ}
&\int_{\mathbb{R}^{q_{j}}} \frac{1}{(b_{j}+u_{j}^\top u_{j}/2)^{a_{j}+q_{j}/2}} d u_{j}. 
\end{align}
For $q_{j}=1$, \eqref{eq:first_integ} becomes
$2^{a_{j}+1/2}\int_{0}^{\infty} (2b_{j}+u_{j}^2)^{-a_{j}-1/2} d u_{j}$
which is finite by condition $2$.  For any integer $q_{j} \ge 2$, we
consider the polar transformation as
$u_{j1} = r \cos \theta_{1}, \,u_{j2} = r \sin \theta_{1} \cos
\theta_{2},\dots,u_{jq_{j}-1} = r \sin \theta_{1}\dots \sin
\theta_{q_{j}-2} \cos \theta_{q_{j}-1},\, u_{jq_{j}} = r \sin
\theta_{1}\dots \sin \theta_{q_{j}-2} \sin \theta_{q_{j}-1}$. Here,
$r>0, \, 0<\theta_{q_{j}-1} < 2\pi, \, 0<\theta_{i} < \pi, \, 1 \le i
\le q_{j}-2$ and the Jacobian is
$r^{q_{j}-1} \prod_{i=1}^{q_{j}-2} (\sin
\theta_{i})^{q_{j}-1-i}$. Note that, when $q_{j}=2$, the Jacobian is
$r$. Therefore, \eqref{eq:first_integ} becomes
\begin{align}
\label{eq:first_integ_bound}
&\int_{\mathbb{R}^{q_{j}}} \frac{1}{(b_{j}+u_{j}^\top u_{j}/2)^{a_{j}+q_{j}/2}} d u_{j} \nonumber \\
&=\int_{0}^{2\pi} \int_{0}^{\pi}  \cdots \int_{0}^{\pi} \int_{0}^{\infty} \frac{r^{q_{j}-1} [\I(q_{j} \ge 3)\prod_{i=1}^{q_{j}-2} (\sin \theta_{i})^{q_{j}-1-i}+\I(q_{j}=2)]}{(b_{j}+ r^2/2)^{a_{j}+q_{j}/2}}  \,d r \,d \theta_{1}\,d \theta_{2} \cdots d \theta_{q_{j}-1} \nonumber \\
&\le \int_{0}^{2\pi} \int_{0}^{\pi}  \cdots \int_{0}^{\pi} \int_{0}^{\infty} \frac{r^{q_{j}-1}}{(b_{j}+ r^2/2)^{a_{j}+q_{j}/2}}  \,d r \,d \theta_{1}\,d \theta_{2} \cdots d \theta_{q_{j}-1},
\end{align}
where the inequality is due to the fact $0<\prod_{i=1}^{q_{j}-2} (\sin \theta_{i})^{q_{j}-1-i} \le 1$ for $0<\theta_{i} < \pi, \, 1 \le i \le q_{j}-2$.

Now, we work on the integration for $r$ in
\eqref{eq:first_integ_bound}, considering $a_{j}>0, b_{j} >0$ and
using $r=\sqrt{2b_{j}} \tan \alpha $,
\begin{align*}
\int_{0}^{\infty} \frac{r^{q_{j}-1}}{(b_{j}+ r^2/2)^{a_{j}+q_{j}/2}}  \,d r &= \int_{0}^{\frac{\pi}{2}}
2^{\frac{q_{j}}{2}} b_{j}^{-a_{j}}(\tan \alpha)^{q_{j}-1} (\sec \alpha)^{2-q_{j}-2a_{j}}\, d \alpha \\
&=2^{\frac{q_{j}}{2}} b_{j}^{-a_{j}} \int_{0}^{\frac{\pi}{2}} \tan \alpha \big(\frac{\sin \alpha}{\cos \alpha} \big)^{q_{j}-2} (\sec \alpha)^{2-q_{j}-2a_{j}} \, d \alpha \\
&\le 2^{\frac{q_{j}}{2}} b_{j}^{-a_{j}} \int_{0}^{\frac{\pi}{2}} \tan \alpha  (\sec \alpha)^{-2a_{j}} \, d \alpha = \frac{2^{\frac{q_{j}}{2}-1} b_{j}^{-a_{j}}}{a_{j}}
\end{align*}
where the inequality is due to the fact that $(\sin \alpha)^{q_{j}-2} \le 1$ for $q_{j} \ge 2$ and $\alpha \in [0,\pi/2]$. 

Hence, by combining the results for $q_{j}=1$ and $q_{j} \ge 2$, we
obtain: if $a_{j}>1/2,\, b_{j}>0,$ for $j=1, \dots,r$, and
$\E \norm{\delta}^{p} < \infty$, the first term from the right
hand side of \eqref{eq:binosuff} is finite.

Note that for $p \ge 1$, the integration in the second term on the right-hand side of \eqref{eq:binosuff} is 
\begin{align}
  \label{eq:ujtuj}
&\int_{\mathbb{R}^{q}} \bigg(\sum_{j=1}^{r} u_{j}^\top u_{j}\bigg)^{\frac{p}{2}} \prod_{j=1}^{r} \frac{\Gamma{(a_{j}+q_{j}/2)}}{(b_{j}+u_{j}^\top u_{j}/2)^{a_{j}+q_{j}/2}} d u \nonumber\\
&\le \int_{\mathbb{R}^{q}} \max\big(2^{(r-1)(\frac{p}{2}-1)},1 \big) \bigg[\sum_{j=1}^{r} (u_{j}^\top u_{j})^{\frac{p}{2}} \bigg] \prod_{j=1}^{r} \frac{\Gamma{(a_{j}+q_{j}/2)}}{(b_{j}+u_{j}^\top u_{j}/2)^{a_{j}+q_{j}/2}} d u,
\end{align}
where the inequality is due to the fact that for $a \ge 0$ and $b \ge 0$, $(a+b)^c \le 2^{c-1}( a^c + b^c)$ for $c \ge 1$ and $(a+b)^c \le  a^c + b^c$ for $0<c <1$. Then we focus on one term in the expression on the right-hand side of \eqref{eq:ujtuj},
\begin{align}
\label{eq:oneterm}
&\int_{\mathbb{R}^{q}} \max\big(2^{(r-1)(\frac{p}{2}-1)},1 \big) (u_{j}^\top u_{j})^{\frac{p}{2}} \prod_{j=1}^{r} \frac{\Gamma{(a_{j}+q_{j}/2)}}{(b_{j}+u_{j}^\top u_{j}/2)^{a_{j}+q_{j}/2}} d u \nonumber\\&= \max\big(2^{(r-1)(\frac{p}{2}-1)},1 \big) \int_{\mathbb{R}^{q-q_{j}}} \int_{\mathbb{R}^{q_{j}}}  (u_{j}^\top u_{j})^{\frac{p}{2}} \frac{\Gamma{(a_{j}+q_{j}/2)}}{(b_{j}+u_{j}^\top u_{j}/2)^{a_{j}+q_{j}/2}} d u_{j} \nonumber\\ & \quad \prod_{j' \neq j} \frac{\Gamma{(q_{j'}/2+a_{j'})}}{(b_{j'}+u_{j'}^\top u_{j'}/2)^{q_{j'}/2+a_{j'}}} d u\setminus u_{j}.
\end{align}
Next, for the inner integration with respect to $u_{j}$,
$\int_{\mathbb{R}^{q_{j}}} (u_{j}^\top
  u_{j})^{\frac{p}{2}}/[(b_{j}+u_{j}^\top u_{j}/2)^{a_{j}+q_{j}/2}] d
u_{j},$ again we consider $q_{j}=1$ and $q_{j} \ge 2$ separately. When $q_{j}=1$,
by changing the variable $\sqrt{2b_{j}}\tan \theta = u_{j}$, we obtain
\begin{align*}
\int_{\mathbb{R}^{q_{j}}}  \frac{(u_{j}^\top u_{j})^{\frac{p}{2}}}{(b_{j}+u_{j}^\top u_{j}/2)^{a_{j}+q_{j}/2}}  d u_{j}&=2^{a_{j}+1/2} \int_{0}^{\infty} \frac{u_{j}^p}{(2b_{j}+u_{j}^2)^{a_{j}+1/2}} du_{j}\\ &= 2^{(p+1)/2} b_{j}^{p/2-a_{j}} \int_{0}^{\frac{\pi}{2}} (\sin \theta)^{p} (\sec \theta)^{p+1-2a_{j}} d \theta\\
&= 2^{(p+1)/2} b_{j}^{p/2-a_{j}} \int_{0}^{\frac{\pi}{2}} \tan \theta (\sin \theta)^{p-1} (\sec \theta)^{p-2a_{j}} d \theta\\
&\le 2^{(p+1)/2} b_{j}^{p/2-a_{j}} \int_{0}^{\frac{\pi}{2}} \tan \theta (\sec \theta)^{p-2a_{j}} d \theta\\ &= 2^{(p+1)/2} b_{j}^{p/2-a_{j}}/(2a_{j} -p),
\end{align*}
where the inequality is due to the fact that $(\sin \theta)^{p-1} \le 1$ for $p \ge 1$ and $\theta \in [0,\pi/2]$, and the integration in the last inequality is finite if $a_{j} >p/2$. 

For $q_{j} \ge 2$, we apply the polar transformation again and have
\begin{align*}
&\int_{\mathbb{R}^{q_{j}}}  \frac{(u_{j}^\top u_{j})^{\frac{p}{2}}}{(b_{j}+u_{j}^\top u_{j}/2)^{a_{j}+q_{j}/2}}  d u_{j}\\
 &= \int_{0}^{2\pi} \int_{0}^{\pi}\cdots \int_{0}^{\pi} \int_{0}^{\infty} \frac{r^{p+q_{j}-1} [\I(q_{j} \ge 3)\prod_{i=1}^{q_{j}-2} (\sin \theta_{i})^{q_{j}-1-i}+ \I(q_{j}=2)]}{(b_{j}+ r^2/2)^{a_{j}+q_{j}/2}}  \,d r \,d \theta_{1}\,d \theta_{2} \cdots d \theta_{q_{j}-1}\\
&\le \int_{0}^{2\pi} \int_{0}^{\pi}\cdots \int_{0}^{\pi} \int_{0}^{\infty} \frac{r^{p+q_{j}-1}}{(b_{j}+ r^2/2)^{a_{j}+q_{j}/2}}  \,d r \,d \theta_{1}\,d \theta_{2} \cdots d \theta_{q_{j}-1}.
\end{align*}
As before, using $r=\sqrt{2b_{j}} \tan \alpha$, we have
\begin{align*}
\int_{0}^{\infty} \frac{r^{p+q_{j}-1}}{(b_{j}+ r^2/2)^{a_{j}+q_{j}/2}}  \,d r &= \int_{0}^{\frac{\pi}{2}}
2^{(p+q_{j})/2} b_{j}^{p/2-a_{j}}(\tan \alpha)^{p+q_{j}-1} (\sec \alpha)^{2-q_{j}-2a_{j}}\, d \alpha \\
&=2^{(p+q_{j})/2} b_{j}^{p/2-a_{j}} \int_{0}^{\frac{\pi}{2}} \tan \alpha \frac{(\sin \alpha)^{p+q_{j}-2}}{(\cos \alpha)^{p+q_{j}-2}} (\sec \alpha)^{2-q_{j}-2a_{j}} \, d \alpha \\
                                                                              &\le 2^{(p+q_{j})/2} b_{j}^{p/2-a_{j}} \int_{0}^{\frac{\pi}{2}} \tan \alpha (\sec \alpha)^{p-2a_{j}} \, d \alpha\\
  &=2^{(p+q_{j})/2} b_{j}^{p/2-a_{j}} /(2a_{j}-p).
\end{align*}
The integration for
$\alpha$ in the last inequality is finite if $a_{j} > p/2$. Thus, when
$a_{j} > p/2,\, b_{j}>0$ for $j=1,\dots,r$, the inner integration
respect to $u_{j}$ in \eqref{eq:oneterm} is finite. For the outer
integration respect to $u \setminus u_{j}$ in \eqref{eq:oneterm}, we
can use the conditions for the integration in the first term from
\eqref{eq:binosuff}. Hence if $a_{j}>1/2,\, b_{j}>0$ for
$j=1, \dots,r$, the outer integration respect to $u_{j'}$ in
\eqref{eq:oneterm} is finite. Thus, Theorem \ref{theosuf} is proved.
\end{proof}
\begin{proof}[Proof of Lemma~\ref{suf_poisson}]
   If $\pi(\beta) \propto 1 $, and the prior for $\tau_{j}$ is as in \eqref{eq:tauprior}, $c(y)$ in \eqref{eq:cy_poisson} becomes
\begin{align}
\label{eq:cy_poisson2}
c(y)& = (2\pi)^{-\frac{q}{2}}\int_{\mathbb{R}_{+}^{r}} \int_{\mathbb{R}^{p}} \int_{\mathbb{R}^{q}} \prod_{i=1}^{n} \frac{\exp [(x_{i}^\top \beta + z_{i}^\top u)y_{i}]}{\exp[\exp(x_{i}^\top \beta + z_{i}^\top u)] y_{i} !} \prod_{j=1}^{r} \tau_{j}^{a_{j}+q_{j}/2-1}\nonumber \\
&\quad \exp[-\tau_{j}(b_{j} + u_{j}^\top u_{j}/2)] d u d\beta d\tau \nonumber \\
& = (2\pi)^{-\frac{q}{2}}B\int_{\mathbb{R}_{+}^{r}} \int_{\mathbb{R}^{p}} \int_{\mathbb{R}^{q}} \prod_{i=1}^{n} {y_{(n)} \choose y_{i}}    \frac{\exp [(x_{i}^\top \beta + z_{i}^\top u)y_{i}]}{\exp[\exp(x_{i}^\top \beta + z_{i}^\top u)]} \prod_{j=1}^{r} \tau_{j}^{a_{j}+q_{j}/2-1} \nonumber \\
&\quad \exp[-\tau_{j}(b_{j} + u_{j}^\top u_{j}/2)] d u d\beta d\tau \nonumber \\
& \le (2\pi)^{-\frac{q}{2}}B\int_{\mathbb{R}_{+}^{r}} \int_{\mathbb{R}^{p}} \int_{\mathbb{R}^{q}} \prod_{i=1}^{n} {y_{(n)} \choose y_{i}}    \frac{d \exp [(x_{i}^\top \beta + z_{i}^\top u)y_{i}]}{[1+\exp(x_{i}^\top \beta + z_{i}^\top u)]^{y_{(n)}}} \prod_{j=1}^{r} \tau_{j}^{a_{j}+q_{j}/2-1} \nonumber \\
&\quad \exp[-\tau_{j}(b_{j} + u_{j}^\top u_{j}/2)] d u d\beta d\tau,
\end{align}
where $B = \prod_{i=1}^{n}(y_{(n)}-y_{i})!/y_{(n)}!$ and the last inequality is because there exists a constant $d$ such that 
\begin{align}
\label{eq:inequality_exp_ch2}
(1+\exp(w))^{y_{(n)}} \le d\exp[\exp(w)]
\end{align}
 for $w \in \mathbb{R}$. We now prove \eqref{eq:inequality_exp_ch2}. When $y_{(n)}=0$, it is straightforward to see that  \eqref{eq:inequality_exp_ch2} is satisfied with $d=1$. When $y_{(n)}=1$, note that $\exp[\exp(w)] \ge 1+\exp(w)$ since $\exp[\exp(w)] -\exp(w)-1$ is an increasing function and $\lim_{w \to -\infty}\exp[\exp(w)] -\exp(w)-1 =0$. If $y_{(n)} \ge 2$, let $ g(w)=\exp(w) - y_{(n)} \log(1+\exp(w))$, then $g'(w) = [\exp(2w)-\exp(w)(y_{(n)}-1)]/[1+\exp(w)]$. Note that $g'(w) \lesseqqgtr 0$ if and only if $\exp(w) \lesseqqgtr y_{(n)}-1 $, that is, $w \lesseqqgtr \log(y_{(n)}-1)$. Hence, $g(w) \ge g(\log(y_{(n)}-1))$, that is $\exp(w) \ge y_{(n)} \log(1+\exp(w)) + g(\log(y_{(n)}-1))$. Thus, we have \eqref{eq:inequality_exp_ch2}, where $d =\exp(-g(\log(y_{(n)}-1)))=\exp(1-y_{(n)})y_{(n)}^{y_{(n)}} $. 

Also, from \eqref{eq:cy_poisson2}, we obtain 
\begin{align}
\label{eq:cy_poisson3}
c(y) &\le B d(2\pi)^{-\frac{q}{2}}\int_{\mathbb{R}_{+}^{r}} \int_{\mathbb{R}^{p}} \int_{\mathbb{R}^{q}} \prod_{i=1}^{n} {y_{(n)} \choose y_{i}}    \big[F_L(x_{i}^\top \beta + z_{i}^\top u)\big]^{y_{i}} \big[1-F_L(x_{i}^\top \beta + z_{i}^\top u)\big]^{y_{(n)}-y_{i}} \nonumber\\
& \quad \prod_{j=1}^{r} \tau_{j}^{a_{j}+q_{j}/2-1} \exp[-\tau_{j}(b_{j} + u_{j}^\top u_{j}/2)] d u d\beta d\tau,
\end{align}
where $F_L(t) = e^{t}/(1+e^{t})$ is the cdf of the standard logistic random variable. Now, we can observe the integrand in \eqref{eq:cy_poisson3} is the same as that in \eqref{eq:cy} when the prior on $\beta$ is $\pi(\beta) \propto 1$, the prior on $\tau$ is as in \eqref{eq:tauprior} and $F \equiv F_L$. Since the standard logistic random variable has all finite moments, the Lemma is proved based on Theorem \ref{theosuf}.
\end{proof}
\begin{proof}[Proof of Theorem~\ref{theosuf_powerprior}]
  Using \eqref{eq:cy_binomial_ineq} and $b_{j}=0$, we have  
\begin{align}
\label{eq:cy_binomialz_powerprior}
c(y) &\le (2\pi)^{-\frac{q}{2}}\int_{\mathbb{R}_{+}^{r}} \int_{\mathbb{R}^{q}}  \bigg[\prod_{i \in I_{3}} {m_{i} \choose y_{i}}\bigg] \int_{\mathbb{R}^{p}} \E\,\big[\I\{t_{i} (x_{i}^\top\beta + z_{i}^\top u) \le t_{i} \delta_{i};1 \le i \le n+k \} \big]d\beta  \nonumber\\ &\quad\prod_{j=1}^{r} \tau_{j}^{a_{j}+q_{j}/2-1} \exp(-\tau_{j} u_{j}^\top u_{j}/2) d u d \tau \nonumber\\
&= (2\pi)^{-\frac{q}{2}}\int_{\mathbb{R}_{+}^{r}} \int_{\mathbb{R}^{q}}  \big[\prod_{i \in I_{3}} {m_{i} \choose y_{i}}\big] \E \big[\int_{\mathbb{R}^{p}} \I\{(X^*_\triangle,Z^*_\triangle) (\beta^\top,u^\top)^\top \le \delta^* \} d\beta \big] \nonumber \\
& \quad \prod_{j=1}^{r} \tau_{j}^{a_{j}+q_{j}/2-1} \exp(-\tau_{j} u_{j}^\top u_{j}/2)d u d \tau \nonumber \\
&\le (2\pi)^{-\frac{q}{2}}\int_{\mathbb{R}_{+}^{r}} \int_{\mathbb{R}^{q}}  \big[\prod_{i \in I_{3}} {m_{i} \choose y_{i}}\big] \E \big[\int_{\mathbb{R}^{p}} \I \big\{\norm{ (\beta^\top,u^\top)} \le l' \norm{\delta^*} \big\} d\beta \big] \nonumber \\
& \quad \prod_{j=1}^{r} \tau_{j}^{a_{j}+q_{j}/2-1} \exp(-\tau_{j} u_{j}^\top u_{j}/2)d u d \tau \nonumber \\
&\le (2\pi)^{-\frac{q}{2}}\int_{\mathbb{R}_{+}^{r}} \int_{\mathbb{R}^{q}}  \big[\prod_{i \in I_{3}} {m_{i} \choose y_{i}}\big] \E \big[2^p l'^p \norm{\delta^*}^p \I\{\norm{ u} \le l' \norm{\delta^*} \}  \big] \nonumber \\
& \quad \prod_{j=1}^{r} \tau_{j}^{a_{j}+q_{j}/2-1} \exp(-\tau_{j} u_{j}^\top u_{j}/2)d u d \tau \nonumber \\
&\le \kappa \E \bigg[ \norm{\delta^*}^p \int_{\mathbb{R}_{+}^{r}} \prod_{j=1}^{r}\int_{\mathbb{R}^{q_{j}}} \I\{\norm{ u_{j}} \le l' \norm{\delta^*} \}    \tau_{j}^{a_{j}+q_{j}/2-1} \exp(-\tau_{j} u_{j}^\top u_{j}/2)d u_j d \tau \bigg],
\end{align}
where $\kappa$ is a constant. Here, we have used the condition $1$ and
\pcite{chen2001propriety} Lemma $4.1$ to obtain the second inequality,
where $l'$ depends on $(X^*_\triangle,Z^*_\triangle)$.  Note that
$\delta=(\delta_{1},\delta_{2},...,\delta_{n+k})$,
$\delta^* = (t_{1}\delta_{1}, t_{2}\delta_{2},\dots,t_{n+k}
\delta_{n+k})^\top$, where $t_{i}$ is defined as before, and
$\norm{\delta^*}=\norm{\delta}$. Then from
\eqref{eq:cy_binomialz_powerprior}, applying similar techniques as in
the proof of Theorem $4.2$ in \cite{chen2002necessary}, we have
\begin{align*}
c(y) &\le \kappa \,\E \big[ \norm{\delta}^p \int_{\mathbb{R}_{+}^{r}} \prod_{j=1}^{r}\tau_{j}^{a_{j}-1} \int_{\mathbb{R}^{q_{j}}} \I \{\norm{ u_{j}} \le l' \norm{\delta} \}    \tau_{j}^{q_{j}/2} \exp(-\tau_{j} u_{j}^\top u_{j}/2)d u_j d \tau \big] \nonumber\\
&\le \kappa_1 \,\E \big[ \norm{\delta}^p \int_{\mathbb{R}_{+}^{r}} \prod_{j=1}^{r}\tau_{j}^{a_{j}-1} \min \big(1,2^{q_{j}/2} \pi^{-q_{j}/2}l'^{q_{j}}\tau_{j}^{q_{j}/2} \norm{\delta}^{q_{j}} \big) d \tau \big] \nonumber\\
     &= \kappa_1 \,\E \big\{ \norm{\delta}^p \prod_{j=1}^{r} \big[\int_{l_{1}}^{\infty} \tau_{j}^{a_{j}-1} d \tau_{j} + 2^{q_{j}/2} \pi^{-q_{j}/2}l'^{q_{j}} \norm{\delta}^{q_{j}} \int_{0}^{l_{1}} \tau_{j}^{a_{j}+q_{j}/2-1}   d \tau_{j} \big]\big\}\\
  & <\infty,
\end{align*}
where $\kappa_1$ is a constant, $l_{1} = \pi/(2l'^2 \norm{\delta}^2)$ and the integrations in the last line are finite as the conditions $2$ and $3$ hold. Therefore, the Theorem is proved.

\end{proof}
\begin{proof}[Proof of Theorem~\ref{theo_nece1}]
When we have binomial responses, from \eqref{eq:cy} we have
\begin{align}
\label{eq:cy_nec}
c(y)&=(2\pi)^{-\frac{q}{2}} \int_{\mathbb{R}_{+}^{r}} \int_{\mathbb{R}^{p}} \int_{\mathbb{R}^{q}}  \bigg[\prod_{i=1}^{n} {m_{i} \choose y_{i}} \big{[}F(x_{i}^\top \beta + z_{i}^\top u)\big{]}^{y_{i}}\big{[}1- F(x_{i}^\top \beta + z_{i}^\top u) \big{]}^{m_{i}-y_{i}}\bigg] \nonumber \\
& \quad \bigg[ \prod_{j=1}^{r} \tau_{j}^{a_{j}+q_{j}/2 -1} \exp{[-\tau_{j} (b_{j}+u_{j}^\top u_{j}/2)]}\bigg] du d\beta d\tau.
\end{align}
If $X$ is not a full column rank matrix, there exists $\beta^* \neq 0$
such that $X \beta^* =0$, that is $x_{i}^\top \beta^* =0$ for
$i=1,\dots,n$.  For $\epsilon' >0$, define
$D_{\epsilon'} =\big\{\tilde{\beta} \in \mathbb{R}^p: \abs{x_{i}^\top
  \tilde{\beta}} < \epsilon', \, i=1,\dots,n \big\} $.  Recall that
$F$ is a nondecreasing function. Then we have
\begin{align}
\label{eq:cy_nec2}
c(y) &\ge (2\pi)^{-\frac{q}{2}} \int_{\mathbb{R}_{+}^{r}} \int_{\mathbb{R}^{q}}  \prod_{i=1}^{n} {m_{i} \choose y_{i}} \big[F(-\epsilon' + z_{i}^\top u)\big]^{y_{i}}\big{[}1- F(\epsilon' + z_{i}^\top u) \big{]}^{m_{i}-y_{i}} \nonumber \\
&\quad \prod_{j=1}^{r} \tau_{j}^{a_{j}+q_{j}/2 -1} \exp{[-\tau_{j} (b_{j}+u_{j}^\top u_{j}/2)]} du  d\tau \int_{\beta \in D_{\epsilon'}} d\beta.
\end{align}
Note that
$ \sum_{i=1}^{n} (x_{i}^\top \tilde{\beta})^2 < \epsilon (=
\epsilon'^2) \Rightarrow \tilde{\beta} \in D_{\epsilon'}.$ Now
$ \sum_{i=1}^{n} (x_{i}^\top \tilde{\beta})^2 < \epsilon
\Leftrightarrow \tilde{\beta}^\top X^\top X \tilde{\beta} <
\epsilon$. By spectral decomposition for $X^\top X$, we have
$X^\top X = P \Lambda P^\top$ where $\Lambda$ is the $p \times p$
diagonal matrix with eigenvalues $\lambda_{1},\dots,\lambda_{p}$ and
the $i$th eigenvector of $X^\top X$ is $p_{i}^\top$, the $i$th row of
$P^\top$. Thus,
\begin{align}
\label{eq:general_xbeta_prime}
\tilde{\beta}^\top P \Lambda P^\top \tilde{\beta} < \epsilon \Leftrightarrow \lambda_{1} g_{1}^2 + \cdots + \lambda_{p} g_{p}^2 < \epsilon,
\end{align}
where $g_{i} = p_{i}^\top \tilde{\beta}$ for $i=1,\dots,p$. Since $X$
is not a full column rank matrix, $\rank(X) \le p-1$. Suppose $\rank(X)=p-1$,
then without loss of generality, let $\lambda_{p}=0$ and
$\lambda_{1},\dots,\lambda_{p-1}$ are all positive. Hence,
\eqref{eq:general_xbeta_prime} becomes
$\lambda_{1} g_{1}^2 + \cdots + \lambda_{p-1} g_{p-1}^2 < \epsilon$.
Let $g=(g_{1},\dots,g_{p})^\top$. Let
$B_1 = \big\{g \in \mathbb{R}^p: \abs{g_{i}} < \sqrt{\epsilon/\tr(X^\top X)}, 1 \le i\le p-1, \, g_{p} \in \mathbb{R}
\big\}$. By change of variables $\beta \rightarrow g$, we have $\int_{\beta \in D_{\epsilon'}} d\beta = \int_{g \in B_1} d g = \infty$. Note that the Jacobian of the transformation $\beta \rightarrow g$ is $\det(P)$ and $\abs{\det (P)}=1$. Since the integrand in the multiple of $\int_{\beta \in D_{\epsilon'}} d\beta$ on the right-hand side of \eqref{eq:cy_nec2} is nonnegative and is not zero (a.e.), the integral is strictly positive. Therefore, $c(y)$ diverges to infinity.

As for the second necessary condition, we focus on the following part in \eqref{eq:cy_nec}:
\begin{align*}
\int_{\mathbb{R}_{+}^{r}}  \prod_{j=1}^{r} \tau_{j}^{a_{j}+q_{j}/2 -1} \exp{[-\tau_{j} (b_{j}+u_{j}^\top u_{j}/2)]}  d\tau =\prod_{j=1}^{r} \int_{\mathbb{R}_{+}} \tau_{j}^{a_{j}+q_{j}/2 -1} \exp{[-\tau_{j} (b_{j}+u_{j}^\top u_{j}/2)]}  d\tau_{j} .
\end{align*}
For fixed $u_{j}$, if $b_{j}+u_{j}^\top u_{j}/2 \le 0$, then we obtain $\int_{0}^{\infty} \tau_{j}^{a_{j}+q_{j}/2 -1} \exp{[-\tau_{j} (b_{j}+u_{j}^\top u_{j}/2)]}  d\tau_{j} \ge  \int_{0}^{\infty} \tau_{j}^{a_{j}+q_{j}/2 -1} d\tau_{j}= \infty $. Also, when $b_{j}+u_{j}^\top u_{j}/2>0$ and $a_{j}+q_{j}/2 \le 0$, we have
\[
  \int_{0}^{\infty} \tau_{j}^{a_{j}+q_{j}/2 -1} \exp{[-\tau_{j} (b_{j}+u_{j}^\top u_{j}/2)]}  d\tau_{j} \ge    \exp{[- (b_{j}+u_{j}^\top u_{j}/2)]} \int_{0}^{1} \tau_{j}^{a_{j}+q_{j}/2 -1}   d\tau_{j} = \infty .
\]
Consequently, the second necessary condition for posterior propriety is proved. 
\end{proof}

\begin{proof}[Proof of Theorem~\ref{theo_nece2}]
Recall $X^*_\triangle$ as defined in Section \ref{subs:binomial}. As in the proof of Theorem 2 of \cite{roy2013posterior} \cite[also see the proof of Theorem 2.2 in][]{chen2001propriety}, $D=\{X^{*\top}_\triangle a: a >0, a \in \mathbb{R}^{n+k}\}$ is a convex cone in $\mathbb{R}^p$. If the condition 2 is not satisfied, by Corollary 11.7.3 of \cite{tyrrell1970convex}, there exists some non-zero vector $h=(h_{1},\dots,h_{p})^\top  \in \mathbb{R}^{p}$ such that $h^\top X^{*\top}_\triangle a \le 0$, for all  $a > 0$. Hence, $\forall \; a \ge 0,\; h^\top X^{*\top}_\triangle a \le 0$ and we obtain
\begin{align}
\label{eq:goodre}
t_{i} x_{\triangle i}^\top h \le 0, \,i=1,2,\dots,n+k,
\end{align}
where $t_{i}$ and $x_{\triangle i}^\top$ are defined in Section \ref{subs:binomial}. Then we have 
\begin{align}
\label{eq:xh}
x_{\triangle i}^\top h \le 0 \;\text{for}\; i \in I_{1}; \quad
x_{\triangle i}^\top h \ge 0 \;\text{for}\; i \in I_{2}; \quad
x_{\triangle i}^\top h = 0 \;\text{for}\; i \in I_{3}. 
\end{align}
Note that $x_{\triangle i}^\top h = 0 \;\text{for}\; i \in I_{3}$ because $t_i = 1$ for $i \in I_{3}$ and also $t_i=-1$ for $i= n+1,\dots n+k$. Without loss of generality, assume that $h_{1} \ne 0$, $\beta = s_{1} h + (0,s_{2},\dots,s_{p})^\top $ and $s =(s_{1},\dots,s_{p})^\top$. For fixed $a>0$, let us define $B_{2} =\big\{u \in \mathbb{R}^q: -a \le z_{i}^\top u \le a,\,i = 1,\dots, n \big\} $. Since $Z$ has full column rank, from \eqref{eq:cy_nec}, we have
\begin{align}
\label{eq:cy_nec3}
c(y) &=(2\pi)^{-\frac{q}{2}} \int_{\mathbb{R}_{+}^{r}} \int_{\mathbb{R}^{p}} \int_{\mathbb{R}^{q}}  \bigg[\prod_{i=1}^{n} {m_{i} \choose y_{i}} \big{[}F(x_{i}^\top \beta + z_{i}^\top u)\big{]}^{y_{i}}\big{[}1- F(x_{i}^\top \beta + z_{i}^\top u) \big{]}^{m_{i}-y_{i}} \bigg]\nonumber \\ &\quad \bigg[\prod_{j=1}^{r} \tau_{j}^{a_{j}+q_{j}/2 -1} \exp{[-\tau_{j} (b_{j}+u_{j}^\top u_{j}/2)]}\bigg] du d\beta d\tau \nonumber\\
&\ge (2\pi)^{-\frac{q}{2}} \int_{\mathbb{R}_{+}^{r}} \int_{\mathbb{R}^{p}} \int_{u \in B_{2}}  \bigg[ \prod_{i=1}^{n} {m_{i} \choose y_{i}} \big{[}F(x_{i}^\top \beta -a)\big{]}^{y_{i}}\big{[}1- F(x_{i}^\top \beta + a) \big{]}^{m_{i}-y_{i}} \bigg]\nonumber\\ & \quad \bigg[\prod_{j=1}^{r} \tau_{j}^{a_{j}+q_{j}/2 -1} \exp{[-\tau_{j} (b_{j}+u_{j}^\top u_{j}/2)]}\bigg] du d\beta d\tau \nonumber\\
&=\abs{h_{1}} (2\pi)^{-\frac{q}{2}} \int_{\mathbb{R}_{+}^{r}} \int_{\mathbb{R}^{p}} \int_{u \in B_{2}} \prod_{i \in I_{1}} \big[1- F(s_{1} x_{i}^\top h+ x_{i}^\top (0,s_{2},\dots,s_{p})^\top + a) \big{]}^{m_{i}} \nonumber \\& \quad  \prod_{i \in I_{2}} \big[ F(s_{1} x_{i}^\top h+ x_{i}^\top (0,s_{2},\dots,s_{p})^\top-a)\big]^{m_{i}} \nonumber \\& \quad \prod_{i \in I_{3}} {m_{i} \choose y_{i}} \big[F(x_{i}^\top (0,s_{2},\dots,s_{p})^\top-a)\big{]}^{y_{i}}\big[1- F( x_{i}^\top (0,s_{2},\dots,s_{p})^\top + a) \big]^{m_{i}-y_{i}} \nonumber \\& \quad \prod_{j=1}^{r} \tau_{j}^{a_{j}+q_{j}/2 -1} \exp{[-\tau_{j} (b_{j}+u_{j}^\top u_{j}/2)]} du d s d\tau,
\end{align}
where the above inequality is based on the definition of the set $B_{2}$ and the last equality follows from a change of variables $\beta \rightarrow s$ with the Jacobian of the transformation being $h_{1}$.  

For fixed $r_1 >0$, define $B_{3} =\big\{s \in \mathbb{R}^p: s_{1} \ge 0,\, \abs{s_{k}} \le r_{1},\, 2 \le k \le p \big\} $. By Cauchy-Schwarz inequality, for $s \in B_{3}$, we have 
\begin{align}
\label{eq:cs_nec}
\abs{x_{i}^\top (0,s_{2},\dots,s_{p})^\top} \le \norm{x_{i}} \sqrt{(p-1) r_{1}^2} \le \norm{x_{i}} p r_{1}
\end{align}
From \eqref{eq:xh}, when $i \in I_{2} \cup I_{3}$,
$x_{i}^\top h \ge 0$. If $s_{1} \ge 0$, then we have
$s_{1} x_{i}^\top h \ge 0$ and thus, for $s \in B_{3}$, 
$s_{1} x_{i}^\top h+ x_{i}^\top (0,s_{2},\dots,s_{p})^\top \ge
-\norm{x_{i}} p r_{1}$. Then for
$i \in I_{2} \cup I_{3}$, we have
\begin{align}
\label{eq:FJ}
F \big(s_{1} x_{i}^\top h+ x_{i}^\top (0,s_{2},\dots,s_{p})^\top-a \big) \ge  F \big(-\norm{x_{i}} p r_{1}-a \big).
\end{align}
Similarly from \eqref{eq:xh}, when $i \in I_{1} \cup I_{3}$, $x_{i}^\top h \le 0$. If $s_{1} \ge 0$, then we have $s_{1} x_{i}^\top h \le 0$. Since \eqref{eq:cs_nec} holds, we have $s_{1} x_{i}^\top h+ x_{i}^\top (0,s_{2},\dots,s_{p})^\top \le \norm{x_{i}} p r_{1}$. Thus, for $i \in I_{1} \cup I_{3}$, we obtain
\begin{align}
\label{eq:FI}
 1-F \big(s_{1} x_{i}^\top h+ x_{i}^\top (0,s_{2},\dots,s_{p})^\top+a \big)  \ge  1-F\big(\norm{x_{i}} p r_{1}+a \big).
\end{align}

Applying \eqref{eq:FJ}, \eqref{eq:FI}, from \eqref{eq:cy_nec3}, we have
\begin{align*}
c(y) &\ge \abs{h_{1}} (2\pi)^{-\frac{q}{2}} \prod_{i \in I_{3}} {m_{i} \choose y_{i}} \int_{\mathbb{R}_{+}^{r}} \int_{s \in B_{3}} \int_{u \in B_{2}}  \prod_{i \in I_{2} \cup I_{3}} \big[F \big(-\norm{x_{i}} p r_{1}-a \big)\big]^{y_{i}}  \\
& \quad \prod_{i \in I_{1} \cup I_{3}} \big{[}1- F \big(\norm{x_{i}} p r_{1}+a \big) \big{]}^{m_{i}-y_{i}}\prod_{j=1}^{r} \tau_{j}^{a_{j}+q_{j}/2 -1} \exp{[-\tau_{j} (b_{j}+u_{j}^\top u_{j}/2)]} du d s d\tau\\
&\ge \abs{h_{1}} (2\pi)^{-\frac{q}{2}} \prod_{i \in I_{3}} {m_{i} \choose y_{i}} \int_{\mathbb{R}_{+}^{r}} \int_{u \in B_{2}}  \prod_{i \in I_{2} \cup I_{3}} \big{[}F(- r_{2}-a)\big{]}^{y_{i}}  \prod_{i \in I_{1} \cup I_{3}} \big{[}1- F(r_{2}+a) \big{]}^{m_{i}-y_{i}}\\
& \quad \prod_{j=1}^{r} \tau_{j}^{a_{j}+q_{j}/2 -1} \exp{[-\tau_{j} (b_{j}+u_{j}^\top u_{j}/2)]} du  d\tau  \int_{s \in B_{4}}  d s = \infty,
\end{align*}
where fixed $r_{1}$ and $r_{2}$ are such that $B_{4} \equiv \big\{s \in \mathbb{R}^p: s_{1} \ge 0,\, \abs{s_{k}} \le r_{1},\, 2 \le k \le p, \, \max_{1 \le i \le n}\norm{x_{i}} pr_{1} \le r_{2}\big\} $ is nonempty. 
\end{proof}

\begin{proof}[Proof of Theorem~\ref{theonec_expo}]
We must show that $c(y)$ in \eqref{eq:cy_expo} diverges to infinity if $X$ is not full column rank. Now, if $X$
is not a full column rank matrix, then there exists $\beta^* \neq 0$ such
that $X \beta^* =0$, that is $x_{i}^\top \beta^* =0$ for
$i=1,\dots,n$.  For $\epsilon' >0$, define
$D_{\epsilon'} =\big\{\tilde{\beta} \in \mathbb{R}^p: \abs{x_{i}^\top
  \tilde{\beta}} < \epsilon', \, i=1,\dots,n \big\} $. 
Recall that $\theta$ is a monotone function,
$\eta_{i}=x_{i}^\top \beta + z_{i}^\top u$ and $\E_{u}$ indicates the expectation with respect to the marginal distribution of $u$. Since $b(\theta)$
is a monotone function, we have
\begin{align}
\label{eq:general_necessary}
c(y) &= \int_{\mathfrak{A}} \int_{\mathbb{R}^{p}}
\int_{\mathbb{R}^{q}}\bigg[\prod_{i=1}^{n} \exp[ (y_{i} \theta_{i} - b(\theta_{i})) + d(y_{i})]\bigg] \phi_{q}(u;0,\Psi) du \pi(\Psi)
d\beta \,d\Psi \nonumber \\
&\propto \int_{\mathfrak{A}} \int_{\mathbb{R}^{p}}
\int_{\mathbb{R}^{q}}\bigg[\prod_{i=1}^{n} \exp[ (y_{i} \theta_{i} - b(\theta_{i}))]\bigg] \phi_{q}(u;0,\Psi) du \pi(\Psi)
d\beta \,d\Psi \nonumber \\
&=  \E_{u} \bigg[\int_{\mathbb{R}^{p}} \prod_{i=1}^{n} \exp[ (y_{i} \theta(\eta_{i}) - b(\theta(\eta_{i}))) ] d\beta\bigg] \nonumber \\
&\ge \E_{u}  \bigg[\prod_{i=1}^{n} \exp[ (y_{i} \theta(\pm \epsilon' + z_{i}^\top u) - b(\theta(\pm \epsilon' + z_{i}^\top u))) ]\bigg] \int_{\beta \in D_{\epsilon'}} d\beta.
\end{align}
Here, $b(\theta(\pm \epsilon' + z_{i}^\top u))$ is
$b(\theta(\epsilon' + z_{i}^\top u))$ if $b(\cdot)$ is increasing and
is $b(\theta(-\epsilon' + z_{i}^\top u))$ if $b(\cdot)$ is
decreasing. Similarly, $y_{i} \theta(\pm \epsilon' + z_{i}^\top u)$ is
$y_{i} \theta(\epsilon' + z_{i}^\top u)$ if $y_i \le 0$ and is
$y_{i} \theta(- \epsilon' + z_{i}^\top u)$ if $y_i >0$. As in the proof of Theorem \ref{theo_nece1}, $\int_{\beta \in D_{\epsilon'}} d\beta = \infty$ and $c(y)$ diverges to infinity. Therefore, the proof is complete.
\end{proof}

\begin{proof}[Details on comparison of tails of $\pi_{J}(\tau)$ and $\pi_{NK}(\tau)$]
  In Section~\ref{sec:natajeff}, we presented $\pi_{J}(\tau)$ and
  $\pi_{NK}(\tau)$ for binary and Poisson GLMMs when $p= q=q_{1}=r=1$.
  To compare the behavior of tails of $\pi_{J}(\tau)$ and
  $\pi_{NK}(\tau)$, let
  $f(\tau)=\log \pi_{J}(\tau) - \log \pi_{NK}(\tau)$. For the binary
  GLMMs, the derivative of $f(\tau)$
\begin{align*}
f'(\tau) &= \frac{-\tau^{-3}+(\tau + n e^{\hat{\beta}}/(1+ e^{\hat{\beta}})^2)^{-3}}{\tau^{-2} - (\tau + n e^{\hat{\beta}}/(1+ e^{\hat{\beta}})^2)^{-2}} + \frac{n e^{\hat{\beta}}}{(1+e^{\hat{\beta}})^2+ n e^{\hat{\beta}} \tau} \nonumber \\
&=\frac{\tau^3 - (\tau + n e^{\hat{\beta}}/(1+ e^{\hat{\beta}})^2)^3}{\tau (\tau + n e^{\hat{\beta}}/(1+ e^{\hat{\beta}})^2) (2\tau + n e^{\hat{\beta}}/(1+ e^{\hat{\beta}})^2) n e^{\hat{\beta}}/(1+ e^{\hat{\beta}})^2} + \frac{n e^{\hat{\beta}}}{(1+ e^{\hat{\beta}})^2+ n e^{\hat{\beta}} \tau} \nonumber \\
         &=\frac{-\frac{3 \tau^2 n e^{\hat{\beta}}}{(1+e^{\hat{\beta}})^2}-\frac{3 \tau\, n^2 e^{\hat{2\beta}}}{(1+e^{\hat{\beta}})^4}-\frac{ n^3 e^{\hat{3\beta}}}{(1+e^{\hat{\beta}})^6}-\frac{ n^3 e^{\hat{5\beta}}}{(1+e^{\hat{\beta}})^6}-\frac{3 \tau^2 n e^{\hat{3\beta}}}{(1+e^{\hat{\beta}})^2}-\frac{3 \tau\, n^2 e^{\hat{4\beta}}}{(1+e^{\hat{\beta}})^4}-\frac{2 n^3 e^{\hat{4\beta}}}{(1+e^{\hat{\beta}})^6}-\frac{6 \tau^2 n e^{\hat{2\beta}}}{(1+e^{\hat{\beta}})^2}-\frac{6 \tau\, n^2 e^{\hat{3\beta}}}{(1+e^{\hat{\beta}})^4}-\frac{\tau^3 n^2 e^{\hat{2\beta}}}{(1+e^{\hat{\beta}})^2}}{\tau (\tau + n e^{\hat{\beta}}/(1+ e^{\hat{\beta}})^2) (2\tau + n e^{\hat{\beta}}/(1+ e^{\hat{\beta}})^2) n e^{\hat{\beta}}/(1+ e^{\hat{\beta}})^2 ((1+ e^{\hat{\beta}})^2+ n e^{\hat{\beta}} \tau)} \\
  &<0.
\end{align*}
Similarly, for the Poisson GLMMs,
 \begin{align*}
f'(\tau) &= \frac{-\tau^{-3}+(\tau+ n e^{\hat{\beta}})^{-3}}{\tau^{-2}-(\tau+ n e^{\hat{\beta}})^{-2}} + \frac{n e^{\hat{\beta}}}{1+n e^{\hat{\beta}} \tau} \nonumber\\
         &=\frac{-3\tau^2ne^{\hat{\beta}}-3\tau\, n^2 e^{2\hat{\beta}}-n^3 e^{3\hat{\beta}}-n^2 e^{2\hat{\beta}} \tau^3}{(\tau^{-2}-(\tau+ n e^{\hat{\beta}})^{-2})(1+n e^{\hat{\beta}}\tau)(\tau +n e^{\hat{\beta}} )^3\tau^3}<0.
 \end{align*}
 Thus, $f(\tau)$ is decreasing in both cases. Also, simple
 calculations show that $f(\tau) \rightarrow \infty$ as
 $\tau \rightarrow 0$ and $f(\tau) \rightarrow - \infty$ as
 $\tau \rightarrow \infty$. Thus, there is $\tau_{0} >0$ with
 $f(\tau_0) = 0$ such that $\pi_{J}(\tau) \gtreqless \pi_{NK}(\tau) $
 if $\tau \lesseqgtr \tau_{0}$.  In Table~\ref{tab:jeff}, we provide
 the values of $\tau_0$ such that $f(\tau_0) \approx 0$ for $n=30$ and
 various values of $\hat{\beta}$. We observe that for the Poisson
 GLMM, $\tau_0$ can be quite large as $\hat{\beta}$ increases implying
 that practically $\pi_J$ is more diffuse than $\pi_{NK}$.
\begin{center}
\begin{table*}[h]
  \caption{Values of $\tau_0$ for different $\hat{\beta}$ values.}
  \centering
\begin{tabular}{c|cccccccc}
  \hline\hline
  &  $\hat{\beta}$&  -2  & -1.5& -1.0& -0.5& 0& 0.5&1.0\\
 \hline
Binary GLMM & $\tau_0$&26&81&196&341&411&341&
                                            196\\\hline
Poisson GLMM &$\tau_0$&61&291&1326&5996&26956&120931&
542186\\
\hline
\end{tabular}
\label{tab:jeff}
\end{table*}
\end{center}
\end{proof}

\end{appendix}
\smallskip
\noindent{\bf Conflict Of Interest.} On behalf of all authors, the corresponding author states
that there is no conflict of interest.
\smallskip

\noindent{\bf Funding.} The second author's work was partially supported by USDA NIFA Grant 2023-70412-41087.

\noindent{\bf Acknowledgements.} The authors acknowledge helpful comments of an anonymous
Associate Editor and an anonymous referee on an earlier version of this article. 


\bibliographystyle{ims}
\bibliography{Ref}

%
%


\end{document}